\DeclareMathOperator*{\argmax}{argmax}
\DeclareMathOperator*{\argmin}{argmin}
\crefname{hypothesis}{Hypothesis}{Hypotheses}
\newtheorem*{theorem-non}{Theorem}
\title{Adaptive Test Allocation for Outbreak Detection and Tracking in Social Contact Networks\thanks{\textbf{Funding: }Pau Batlle was partially supported by Capital One Inc. and the Cellex Foundation. Joan Bruna acknowledges partial support by Alfred P. Sloan Foundation, NSF RI-1816753 and NSF CAREER CIF 1845360. Carlos Fernandez-Granda was partially supported by NSF DMS 2009752. Victor M. Preciado was partially supported by NSF-CAREER-ECCS-1651433, NSF-III-200884556, and the Rockefeller Foundation.
}}
\author{Pau Batlle\thanks{Courant Institute of Mathematical Sciences, Center for Data Science, New York University, NY.}
\and Joan Bruna\footnotemark[2]
\and Carlos Fernandez-Granda\footnotemark[2] 
\and Victor M. Preciado\thanks{Department of Electrical and Systems Engineering, Applied Mathematics and Computational Science, University of Pennsylvania, PA}}
\newcommand*{\addFileDependency}[1]{% argument=file name and extension
  \typeout{(#1)}% latexmk will find this if $recorder=0 (however, in that case, it will ignore #1 if it is a .aux or .pdf file etc and it exists! if it doesn't exist, it will appear in the list of dependents regardless)
  \@addtofilelist{#1}% if you want it to appear in \listfiles, not really necessary and latexmk doesn't use this
  \IfFileExists{#1}{}{\typeout{No file #1.}}% latexmk will find this message if #1 doesn't exist (yet)
}
\newcommand{\PP}{\mathbb{P}}
\newcommand\numberthis{\addtocounter{equation}{1}\tag{\theequation}}
\newtheorem{prop}{Proposition}
\begin{document}

\maketitle

% REQUIRED
\begin{abstract}
We present a general framework for adaptive allocation of viral tests in social contact networks. 
We pose and solve several complementary problems. First, we consider the design of a social sensing system whose objective is the early detection of a novel epidemic outbreak. In particular, we propose an algorithm to select a subset of individuals to be tested in order to detect the onset of an epidemic outbreak as fast as possible. We pose this problem as a hitting time probability maximization problem and use submodularity optimization techniques to derive explicit quality guarantees for the proposed solution.
Second, once an epidemic outbreak has been detected, we consider the problem of adaptively distributing viral tests over time in order to maximize the information gained about the current state of the epidemic. We formalize this problem in terms of information entropy and mutual information and propose an adaptive allocation strategy with quality guarantees.
For these problems, we derive analytical solutions for any stochastic compartmental epidemic model with Markovian dynamics, as well as efficient Monte-Carlo-based algorithms for non-Markovian dynamics.
Finally, we illustrate the performance of the proposed framework in numerical experiments involving a model of Covid-19 applied to a real human contact network.
\end{abstract}

% REQUIRED
\begin{keywords}
Epidemiology, Social networks, Markov chains, Submodular optimization, entropy-based sampling
\end{keywords}

% REQUIRED
\begin{AMS}
92D30, 91D30, 60J20, 90C35, 94A17
\end{AMS}

\section{Introduction}
\label{sec:intro}
In December $31^{st}$ 2019, The Municipal Health Commission of Wuhan
(China) reported a cluster of cases of pneumonia caused by a novel
coronavirus \cite{WHO_timeline}. This new virus rapidly propagated
worldwide through the air transportation network and many
countries decided to implement severe mobility restrictions and social
distancing policies to ``flatten the curve'' of the pandemic. However, as society reopens, mobility increases, and social distancing relaxes, new epidemic outbreaks become a very real threat. In this situation, it is of upmost societal importance to develop efficient strategies for early detection and tracking of epidemic outbreaks.

In this paper, we study the problem of allocating viral tests \cite{CDC_ViralTests}
 in order to (\emph{i}) detect a novel epidemic outbreak as early as possible, as well as (\emph{ii}) to retrieve as much information as possible about the evolution of the epidemic. In our work, we consider a social contact network over which a disease is spreading according to a stochastic compartmental model \cite{SureveyVP}. The main questions explored in this article are:
 
 \smallskip
 
1.\emph{ Early detection of epidemic outbreaks with limited viral tests}:
In particular, what nodes should we test in a contact social graph to maximize the probability of early detection? We will pose this problem in terms of hitting times of an stochastic process associated to the social graph and propose an algorithm to solve it with quality guarantees based on submodular optimization.

\smallskip

2.\emph{ Estimation of past and current state of the disease}: Given the results of a collection
of viral tests, what are the probabilities of infection for each individual
in the social network? Furthermore, we analyze the past evolution of the epidemic to estimate where and when the infection is most likely to have started.

\smallskip

3.\emph{ Adaptive test allocation for epidemic tracking}: Once an epidemic
outbreak has been detected, how should we dynamically allocate viral tests
to gain as much information as possible about the current state of the epidemic?

\smallskip

General work on stochastic compartmental models in networks include \cite{Meyers2006}, \cite{RePEc:wop:safiwp:01-12-073}, \cite{Kiss2017}. Additionally, \cite{SureveyVP} and \cite{Britton2019EpidemicMO} provide a general survey of problems involving spreading processes in networks. In \cite{VanMieghem2009}, van Mieghen et al. study the spread of malware in computer networks using Markov chains; however, their focus is on mean-field approximations derived from continuous-time Markov chains, while our work focuses on the exact stochastic model of the epidemic process. In \cite{Leskovec2007}, Leskovec et al. propose a sensor placement framework, similar to the test placement in our work, to detect outbreaks of water contaminants and other spreads; however, the authors use a deterministic propagation model over a directed network, instead of a stochastic epidemic model. In \cite{Shah2011}, Shah et al. study the culprit detection problem for the popular SI epidemic model on a network using the so-called rumor centrality. In \cite{Spinelli2019}, Spinelli et al. also study the culprit detection problem for a specific family of spreading models without any concerns about early detection. In  \cite{Yan2020}, Yan et al. consider the independent cascade model and study the problem of immunizing edges in order to minimize the expected number of infected nodes at the conclusion of the spreading process. As far as testing is concerned, there is literature that gives results on different network monitoring techniques \cite{Christakis2010} \cite{GarciaHerranz2014} \cite{Salathe2010} \cite{Sun2014} \cite{Herrera2016}; however, these works do not aim to find an optimal solution according to any metric, but analyze the performance of particular heuristics. Finally, the works in \cite{Borgatti2005}, \cite{Christley2005} and \cite{Kitsak2010} analyze several heuristics for epidemic detection based on different network centrality measures.

The article is organized as follows. In \cref{Theo}, we formalize our theoretical setup and discuss the models to which this framework is applicable. The three questions described above are explored in \cref{EarlyDet}, \cref{Inference} and \cref{DynamicAllocation}, respectively. Finally, \cref{Experiments} presents experiments in a real dataset of human interactions where we apply our framework using a realistic model of the spread of Covid-19 \cite{Ferretti2020}.

\section{Notation and preliminaries}
\label{Theo}
 For a given $n\in \mathbb N$, we let $[n]$ be the set $\{1,\ldots,n\}$. We consider a given network $G=(V,E)$ where $V=[n]$, and a continuous-time stochastic compartmental epidemic process, denoted by $\{X(t)\}_{t\geq 0}$, running over $G$. At every $t \in \mathbb{R}_{+}$, each of the $n$ nodes in the network is in one out of $s$ possible states, where each state represents a compartment in the epidemic model. Since we have $n$ nodes, the networked stochastic process $\{X(t)\}_{t\geq 0}$ has a finite state space $\mathcal{S}$ with $|\mathcal{S}| = s^n$. One of the simplest networked compartmental models is the SIR model \cite{1927}, which presents three compartments: Susceptible, Infectious, and Removed. In this model, infectious nodes may infect healthy neighbors with probability rate $\beta$ and may transition into the removed compartment (i.e., no longer infectious) with probability rate $\gamma$.
 
 In the rest of the paper, we assume that the initial state $X(0)$ of the epidemic process is randomly chosen from a known probability distribution $D$ supported in $\mathcal{S}$, and that all subsequent probabilities are conditioned on the realization of $X(0)$. If the epidemic process $\{X(t)\}_{t\geq 0}$ is Markovian, we can derive analytical solutions to the problems under consideration (shown in  \cref{Markovchains}). However, these analytical solutions are usable in practice only for relatively small graphs. In the following sections, we will provide computational tools to analyze non-Markovian epidemic processes running over large graphs. In this more general case, instead of using untractable analytical solutions, we will provide efficient numerical algorithms to solve different problems of interest.

\section{Early detection of epidemic outbreaks with limited viral tests}
\label{EarlyDet}
%\subsection{Sensor Placement and Epidemic Inference Tasks}
The first key question we address is how to optimally monitor a contact network for early detection of a new outbreak with limited resources. We assume that we are able to continuously monitor the health of $k$ nodes of the network before the onset of an outbreak. We aim to answer two optimization questions: 

\smallskip

\textbf{Q1A (Test placement with monitoring constraint):} \emph{For a given $k\in\mathbb{N}$ and $\tau>0$, which $k$ nodes should we continuously monitor to detect a novel outbreak before a certain time $\tau$ (counting from the onset of the outbreak) with the highest possible probability?}

\smallskip

\textbf{Q1B (Test placement with probability constraint):} \emph{Given a threshold time $\tau>0$ and a probability $P$, what is the minimum number $k$ of nodes we need to monitor to detect the epidemic outbreak before time $\tau$ with a probability $P$? Where should we place them?}

\smallskip

To analyze these questions, we assume that those nodes being monitored are frequently tested. We assume that the available tests provide partial information about the state of the node. In particular, we consider a partition of the set of $s$ possible states into two non-empty subsets, $G_+$ and $G_-$, and assume that the test is able to determine in what subset the state of the monitored node is. In practice, the set $G_+$ (resp., $G_-$) represent node states that would result in a positive (resp., negative) viral test result. We say that the a node is \textit{detectable} if its state is in $G_+$, and that the epidemic is detected when one of the monitored nodes becomes detectable for the first time.

In order to track the time it takes for the epidemic to be detected, we use the concept of stopping time of a stochastic process. Given a network stochastic process $\{X(t)\}_{t\geq0}$ and a subset $A\subset \mathcal{S}$, its \textit{stopping time} $T_A$ is defined as the random variable $\min\{t \geq 0 : X(t) \in A\}$, where $X(t) \in \mathcal{S}$ is the state of the stochastic process at time $t$. If the process never reaches $A$, we set $T_A = \infty$. Given a subset of nodes $W \subset V$, we additionally define the \textit{detection set} of $W$, denoted by $D_W$, as the subset of $\mathcal{S}$ consisting of those network states in which at least one of the nodes in $W$ is in a detectable state (i.e., one of the nodes in $W$ test positive). This means that if we monitor the nodes in $W$, the epidemic outbreak is detected when the network process $\{X(t)\}_{t\geq0}$ reaches one of the states in $D_W$; however, the exact network state will still be unknown.

Now, Question \textbf{Q1A} can be formalized as follows:
Given a time horizon $\tau >0$, we want to monitor $k$ nodes of the network in order to maximize the probability that the process reaches the detection set $D_W$ before time $t=\tau$ (counting from the onset of the epidemic outbreak). Hence, the optimal set of nodes to be monitored can be found as the solution of the following optimization problem:
\begin{align}
\argmax_{W\subset V, |W| = k} \mathbb{P}\left(T_{D_W} \leq \tau \right). \tag{Q1A}%&=  %\argmax_{W\subset V, |W| = k}  \mathbb{P}\left(X(\tau) \in D_W\right)
\end{align}
Similarly, the answer to Question \textbf{Q1B} is the solution to the following optimization:
\begin{align}
\argmin_{W\subset V \text{ s.t } \mathbb{P}\left(T_{D_W} \leq \tau \right) \geq P} |W|, \tag{Q1B}
\end{align}
i.e, the smallest set of nodes that we need to monitor such that the probability of detecting the epidemic outbreak before time $t=\tau$ is greater than $P$. We conveniently define the optimization objective function over subsets of nodes for a given $\tau$ as $f_\tau \colon W \mapsto \mathbb{P}\left(T_{D_W} \leq \tau\right)$, so that equations (\textbf{Q1A}) and (\textbf{Q1B}) can be written, respectively, as
\begin{align}
\argmax_{W\subset V, |W| = k} f_\tau(W) \text{\quad and }~ \argmin_{W\subset V, f_\tau(W) \geq P} |W|~.
\end{align}

Notice that these are combinatorial optimization problems and finding the optimal solutions is exponentially hard. In the rest of the paper, we focus on finding approximate solutions with quality guarantees. In this direction, there are two separate subproblems we need to address: First of all, the function  $f_\tau$ can only be computed for Markovian epidemic processes taking place in small networks (see \cref{Proofs}); hence, we need to approximate this objective function for non-Markovian processes over large networks. Secondly, we also need an optimization scheme to find an approximated solution with qualities guarantees (without evaluating $f_\tau$ an exponential number of times).

\subsection{Function evaluation}
\label{f_hat_def}
The function $f_\tau$ can be approximated using Monte Carlo samples, as described below. First, we simulate the stochastic epidemic process $N_R$ times, where each simulation will be stopped when one of two things happen: Either we reach an absorbing state, or all the nodes have already reached a detectable state at least once. Then, $f_\tau$ can be approximated as follows: Let $L$ be a $N_R\times n$ matrix such that, for every run of the process $r \in [N_R]$
\begin{align}
L[r,j] = \min\{{T \in \mathbb{R_+} \colon X_j(T)\in G_+\text{ in run }r}\} ~,
\end{align}
where $X_j(t)$ is the state of node $j$ at time $t$ and $L[r,j] = \infty$ if node $j$ is never detectable in run $r$. Given the matrix $L$ and any time $\tau$, an estimator for $f_\tau$, denoted by $\hat f_\tau$, can be calculated as follows: $\hat{f}_\tau(W)=\frac{1}{N_R}|\{r \in [N_R] \colon \min_{i\in W}{L[r,i]} \leq \tau\}|$. As $N_R \rightarrow \infty$, we have that $\hat{f}_\tau \rightarrow f_\tau$ uniformly, because of the law of large numbers.

\subsection{Optimization of $f_\tau$ via submodularity}
The combinatorial structure of the problem requires not only a way to rapidly evaluate the objective function but an optimization scheme that avoids evaluating an exponential number of possible node monitorizations. In order to do that, we prove and exploit the submodularity properties of $f_\tau$ combined with fundamental results about submodular optimization. 

If $\Omega$ is a finite set, a function $h \colon \mathcal{P}(\Omega) \rightarrow \mathbb{R}$ is called \emph{submodular} if it satisfies one of these three equivalent conditions: (\textbf{Condition }1) $\forall X, Y \subseteq \Omega$ with $X \subseteq Y$ and every ${\displaystyle x\in \Omega \setminus Y}$, we have that ${\displaystyle h(X\cup \{x\})-h(X)\geq h(Y\cup \{x\})-h(Y)}$; \\ (\textbf{Condition }2) $\forall S, T \subseteq \Omega$ we have that $ h(S)+h(T)\geq h(S\cup T)+h(S\cap T)$; \\ (\textbf{Condition }3) $\forall X\subseteq \Omega$ and ${\displaystyle x_{1},x_{2}\in \Omega \backslash X}$ such that ${\displaystyle x_{1}\neq x_{2}}$, $h(X\cup \{x_{1}\})+h(X\cup \{x_{2}\})\geq h(X\cup \{x_{1},x_{2}\})+h(X)$. We aim to prove that $f_\tau$ is  a non-negative, monotone (i.e., $f_\tau(X) \leq f_\tau(Y)$ for $X\subset Y$) and submodular function. The non-negativity is trivial from the definition of probability, and monotonicity comes from the fact that for $A\subset B$, $D_A \subset D_B$, and so the event $T_{D_A} \leq \tau$ implies that $T_{D_B} \leq \tau$, hence, $f_\tau(A) \leq f_\tau(B)$. Furthermore, $f_\tau$ is submodular (as proved in  \cref{Proofs}).

\begin{theorem}
\label{thm:submod}
The set-function $f_\tau \colon W \mapsto \mathbb{P}\left(T_{D_W} \leq \tau\right)$ is submodular.
\end{theorem}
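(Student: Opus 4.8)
The plan is to reduce $f_\tau$ to a probabilistic coverage function and then invoke the standard diminishing-returns argument. The crucial observation, already implicit in the Monte-Carlo evaluation of \cref{f_hat_def}, is that the detection event factors over nodes. For each node $i \in V$, let $A_i$ denote the event that node $i$ enters a detectable state no later than $\tau$, i.e. $A_i = \{L_i \leq \tau\}$ with $L_i = \min\{t \geq 0 : X_i(t) \in G_+\}$ (and $L_i = \infty$ if node $i$ is never detectable). Since a state lies in $D_W$ exactly when some node of $W$ is in $G_+$, the stopping time satisfies $T_{D_W} = \min_{i \in W} L_i$ pathwise, and therefore
\begin{align}
\{T_{D_W} \leq \tau\} = \bigcup_{i \in W} A_i, \qquad f_\tau(W) = \mathbb{P}\Bigl(\bigcup_{i \in W} A_i\Bigr).
\end{align}
This identity is the step that does all the work: once $f_\tau$ is expressed as the probability of a union of fixed node-level events, submodularity becomes the classical fact that coverage functions are submodular.

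I would then verify Condition 1 (diminishing returns) directly. Fix $X \subseteq Y \subseteq V$ and a node $x \in V \setminus Y$. The marginal gain of adding $x$ is the probability of the portion of $A_x$ not already covered,
\begin{align}
f_\tau(X \cup \{x\}) - f_\tau(X) = \mathbb{P}\Bigl(A_x \cap \bigl(\textstyle\bigcup_{i \in X} A_i\bigr)^c\Bigr),
\end{align}
and analogously for $Y$. Because $X \subseteq Y$ forces $\bigcup_{i \in X} A_i \subseteq \bigcup_{i \in Y} A_i$, the complement inclusion reverses, so the uncovered part of $A_x$ relative to $Y$ is contained in that relative to $X$. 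Monotonicity of probability then gives $f_\tau(X \cup \{x\}) - f_\tau(X) \geq f_\tau(Y \cup \{x\}) - f_\tau(Y)$, which is exactly submodularity.

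The main obstacle is not this inequality chain, which is routine, but justifying the pathwise identity $T_{D_W} = \min_{i \in W} L_i$ and the resulting set identity $\{T_{D_W} \leq \tau\} = \bigcup_{i \in W} A_i$ at the level of the underlying probability space. This requires being careful about the definition of $D_W$ and about the measurability of the hitting-time events, but it uses no special structure of the dynamics; in particular it never invokes the Markov property, so the result holds for the general non-Markovian processes considered here. An alternative is to bypass Condition 1 and instead establish Condition 2 through inclusion–exclusion on the union, but the marginal-gain computation above is the cleaner route.
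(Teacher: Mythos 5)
Your proof is correct, and it takes a more direct route than the paper's. The paper factors $f_\tau$ as a composition $h \circ D$: it first proves (\cref{teo1}) that hitting-time probabilities $A \mapsto \mathbb{P}\left(T_A \leq \tau\right)$ form a submodular function on subsets $A$ of the \emph{state space} $\mathcal{S}$, and then transfers submodularity to subsets of \emph{nodes} via a general ``conservation of submodularity under pullback'' lemma (\cref{teo2}), using that the detection-set map $W \mapsto D_W$ is monotone and union-preserving. You collapse both steps into the single pathwise identity $T_{D_W} = \min_{i\in W} L_i$, which exhibits $f_\tau$ directly as a coverage function $W \mapsto \mathbb{P}\bigl(\bigcup_{i\in W} A_i\bigr)$ over node-level events, after which submodularity is the standard marginal-gain computation. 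The probabilistic kernel is the same in both arguments --- in \cref{teo1} the paper likewise writes the marginal gain as $\mathbb{P}\bigl(T_{\{x\}} \leq \tau \cap T_Z > \tau\bigr)$ and concludes by event inclusion --- but your route dispenses with the intermediate submodularity statement on $\mathcal{P}(\mathcal{S})$ and with the abstract pullback lemma entirely. What the paper's structure buys is reusability: \cref{teo1} and \cref{teo2} apply to hitting times of arbitrary state subsets and to any union-preserving monotone map, not just detection sets. What your structure buys is brevity, and a pleasing symmetry: your $L_i$ is the population counterpart of the matrix entries $L[r,i]$ from \cref{f_hat_def}, so your argument is exactly the population analogue of the paper's proof of \cref{thm:submodfhat}; the paper itself notes after that proof that taking limits in the submodularity inequality for $\hat f_\tau$ gives an alternative proof of \cref{thm:submod}, and your proposal is essentially that alternative proof carried out directly in probability, without passing through the empirical approximation.
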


% \begin{proof}
% We apply Lemma \ref{teo2} to the composition $h\circ D$, where $D$ is the detection set function $D \colon \mathcal{P(V)} \to \mathcal{P(S)}$ mapping $W$ to $D_W$, where $V$ is the set of vertices in the graph and $S$ the set of Markov chain states. By Lemma \ref{teo1} $h$ is submodular and it is also clearly monotonous by the same argument that we have shown that $f_\tau$ is monotonous. By definition of the detection set function $D$, $D_A \subset D_B$ if $A \subset B$ and $D_{A\cup B} = D_A \cup D_B$. Therefore, $f = h \circ D$ is submodular. 
% \end{proof}
%%%%%%%%%%%%%%%%%%%%%%%%%%%%%%%%%%%%%%%%%%%%%%%%%%%%%%%%%%%%
We can now invoke two well-known results in submodular optimization theory to derive quality guarantees of greedy-like optimization schemes aiming to solve Problems (\textbf{Q1A}) and (\textbf{Q1B}), using \cref{alg:greedy1} and \cref{alg:greedy2}, described below. Both algorithms run in polynomial time and only require $\mathcal{O}(n^2)$ evaluations of the objective function.

\begin{theorem}[\cite{Nemhauser1978}]
\label{thm:alg1}
If a set-function $f$ is monotone, submodular and non-negative, the greedy scheme in \cref{alg:greedy1} applied to Problem (\textbf{Q1A}) returns a solution $S'$ for which $f(S') \geq (1-\frac{1}{e})f(S^*)$ where $S^*$ is the optimal set.
\end{theorem}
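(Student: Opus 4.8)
The plan is to follow the classical analysis of the greedy algorithm for cardinality-constrained monotone submodular maximization, tracking how quickly the greedy objective closes the gap to the optimum. Write $S^* = \{o_1,\dots,o_k\}$ for an optimal solution of Problem (\textbf{Q1A}), and let $S_0 = \emptyset \subseteq S_1 \subseteq \cdots \subseteq S_k = S'$ be the sequence of sets produced by \cref{alg:greedy1}, where $S_{i+1} = S_i \cup \{g_{i+1}\}$ and $g_{i+1} \in \argmax_{x} [f(S_i \cup \{x\}) - f(S_i)]$ is the element of maximal marginal gain at step $i$. The goal is to show that the residual gap $\delta_i := f(S^*) - f(S_i)$ contracts by a factor $(1-1/k)$ at every step, and then to unwind the resulting recursion and invoke non-negativity.

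The main step, and the only place submodularity genuinely enters, is a decomposition inequality: for any set $A$,
\begin{align}
f(S^*) \;\le\; f(A) + \sum_{x \in S^* \setminus A} \bigl[f(A \cup \{x\}) - f(A)\bigr].
\end{align}
I would prove this by monotonicity together with a telescoping argument. Enumerate $S^* \setminus A = \{z_1,\dots,z_m\}$ and write the total gain as the telescope $f(A \cup S^*) - f(A) = \sum_{j=1}^m \bigl[f(A \cup \{z_1,\dots,z_j\}) - f(A \cup \{z_1,\dots,z_{j-1}\})\bigr]$. Each increment is bounded by $f(A \cup \{z_j\}) - f(A)$ using the diminishing-returns form of submodularity (Condition 1, applied with the smaller set $A$, the larger set $A \cup \{z_1,\dots,z_{j-1}\}$, and the new element $z_j$), and monotonicity gives $f(S^*) \le f(A \cup S^*)$. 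Together these yield the displayed bound. This decomposition is the crux, since it converts a statement about the single optimal set $S^*$ into a sum over its individual elements, each of which can then be compared against what the greedy rule selects.

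Applying the decomposition with $A = S_i$, each marginal gain $f(S_i \cup \{x\}) - f(S_i)$ on the right is at most the greedy gain $f(S_{i+1}) - f(S_i)$ by the defining choice of $g_{i+1}$, and the sum has at most $k$ terms, so
\begin{align}
f(S^*) \;\le\; f(S_i) + k\,\bigl[f(S_{i+1}) - f(S_i)\bigr].
\end{align}
Rewriting in terms of $\delta_i$ gives $\delta_i \le k(\delta_i - \delta_{i+1})$, that is, $\delta_{i+1} \le (1 - 1/k)\,\delta_i$. Iterating from $i=0$ and using non-negativity, $\delta_0 = f(S^*) - f(\emptyset) \le f(S^*)$, so $\delta_k \le (1-1/k)^k f(S^*)$. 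The elementary bound $(1-1/k)^k \le 1/e$ then gives $f(S') = f(S_k) = f(S^*) - \delta_k \ge (1 - 1/e)\,f(S^*)$, as claimed. I expect the telescoping decomposition inequality to be the only nontrivial ingredient; the per-step contraction recursion, the induction, and the estimate on $(1-1/k)^k$ are routine.
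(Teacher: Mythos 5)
Your proof is correct, and it coincides with the proof of the cited result: the paper itself gives no proof of this theorem, attributing it to \cite{Nemhauser1978}, and your argument—the telescoping decomposition $f(S^*) \le f(A) + \sum_{x \in S^*\setminus A}[f(A\cup\{x\})-f(A)]$, the per-step contraction $\delta_{i+1} \le (1-1/k)\,\delta_i$, and the bound $(1-1/k)^k \le 1/e$—is exactly the classical Nemhauser--Wolsey--Fisher analysis. The only (entirely standard) implicit step is that padding the sum from $|S^*\setminus S_i|$ terms up to $k$ terms uses the non-negativity of the greedy gain, which follows from monotonicity, so there is no gap.
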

\begin{algorithm}[H]
%\SetKwData{Left}{left}\SetKwData{This}{this}\SetKwData{Up}{up}
%\SetKwFunction{Union}{Union}\SetKwFunction{FindCompress}{FindCompress}
%\SetKwInOut{Input}{Input}\SetKwInOut{Output}{Output}
%\SetKwFunction{FindCompress}{FindCompress}
\textbf{Input:} $k \in \mathbb{N}, f$ function over subsets of V \\
\textbf{Output:} $S \subset V$ with $|S| = k$, an approximate solution to (Q1A) \\
$S \leftarrow \emptyset$, $i \leftarrow 0$ \\
\While{$i\leq k$}{
$S \leftarrow S \cup \displaystyle \argmax_{v \in V \setminus S}{f(S\cup \{v\})}$\\
$i \leftarrow i+1$}
\Return{S}
\caption{Greedy scheme applicable to (Q1A) when $f=f_\tau$}
\label{alg:greedy1}
\end{algorithm}
\begin{theorem}[\cite{Wolsey1982}]
\label{thm:alg2} 
If $f$ is monotone and submodular, the greedy scheme in \cref{alg:greedy2} applied to Problem (\textbf{Q1B}) returns a solution $S'$ for which~$\dfrac{|S'|}{|S^*|} \leq 1 + \log\dfrac{f(V)-f(\emptyset)}{f(S')-f(S_{-1})}$, where $S^*$ is the optimal set and $S_{-1}$ is the solution set at the iteration prior to the termination of \cref{alg:greedy2}.
\end{theorem}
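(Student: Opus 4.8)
The plan is to prove Theorem~\ref{thm:alg2}, which is the classical Wolsey approximation guarantee for greedy set cover with a submodular, monotone objective. The key idea is to compare, at every greedy iteration, the marginal gain obtained by the greedy choice against the residual "uncovered mass" $f(V) - f(S_i)$, and to show that each step removes at least a constant fraction of this residual. This yields a geometric-type decay of the residual, and taking logarithms converts the multiplicative decay into the additive $\log$ term appearing in the stated bound.

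\begin{proof}[Proof sketch]
Write $S_0 = \emptyset, S_1, S_2, \ldots$ for the sequence of sets produced by \cref{alg:greedy2}, where $S_i$ has $i$ elements, and suppose the algorithm terminates at iteration $\ell$, so $S' = S_\ell$ and $S_{-1} = S_{\ell-1}$. For each $i$, let $\rho_i := f(V) - f(S_i)$ denote the residual. The first step is to establish the \emph{key covering inequality}: for any $i$,
\begin{equation}
\rho_i \;\leq\; |S^*| \cdot \bigl(f(S_{i+1}) - f(S_i)\bigr),
\end{equation}
i.e., the residual is at most $|S^*|$ times the greedy marginal gain at step $i+1$. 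The plan for proving this is to write $S^* = \{o_1, \ldots, o_m\}$ with $m = |S^*|$, use monotonicity to bound $f(V) \geq f(S^* \cup S_i)$, and telescope $f(S^* \cup S_i) - f(S_i)$ as a sum of marginal gains $f(S_i \cup \{o_1, \ldots, o_j\}) - f(S_i \cup \{o_1, \ldots, o_{j-1}\})$ over $j = 1, \ldots, m$. Submodularity (Condition~1, diminishing returns) bounds each such term by $f(S_i \cup \{o_j\}) - f(S_i)$, and since the greedy rule picks the element maximizing the marginal gain, each of these is at most $f(S_{i+1}) - f(S_i)$. Summing over the $m$ terms gives the inequality.
\end{proof}

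Rewriting the covering inequality as $\rho_i - \rho_{i+1} = f(S_{i+1}) - f(S_i) \geq \rho_i / |S^*|$, we obtain $\rho_{i+1} \leq \rho_i\,(1 - 1/|S^*|)$, so that $\rho_i \leq \rho_0 (1 - 1/|S^*|)^i \leq \rho_0\, e^{-i/|S^*|}$, where $\rho_0 = f(V) - f(\emptyset)$. The next step is to run this decay until the residual drops below the final greedy gain. Specifically, I would use the bound at the iteration where the residual $\rho_{\ell-1}$ is still positive: combining $\rho_{\ell-1} \leq \rho_0\, e^{-(\ell-1)/|S^*|}$ with the covering inequality at step $\ell-1$, namely $\rho_{\ell-1} \leq |S^*|\,(f(S_\ell) - f(S_{\ell-1}))$, one derives $\ell - 1 \leq |S^*| \log\!\bigl(\rho_0 / \rho_{\ell-1}\bigr)$ and then relates $\rho_{\ell-1}$ to the terminal marginal gain $f(S') - f(S_{-1})$. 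Dividing by $|S^*|$ and accounting for the single extra element yields $|S'|/|S^*| \leq 1 + \log\bigl((f(V)-f(\emptyset))/(f(S')-f(S_{-1}))\bigr)$, matching the stated bound.

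The routine calculations are the telescoping sum and the algebra converting multiplicative decay to the logarithmic form, neither of which is conceptually hard. The main obstacle — and the step requiring the most care — is the bookkeeping at termination: pinning down exactly which iteration indices $\ell$, $\ell-1$ correspond to $S'$, $S_{-1}$, ensuring the residual $\rho_{\ell-1}$ used in the logarithm is strictly positive (so the $\log$ is well-defined), and correctly identifying the terminal marginal gain $f(S') - f(S_{-1})$ with the quantity bounding $\rho_{\ell-1}$. Since this is a cited classical result (\cite{Wolsey1982}), the cleanest route is to invoke monotonicity and submodularity exactly as above; the only genuine input the paper must supply is that $f_\tau$ satisfies these hypotheses, which is furnished by \cref{thm:submod} together with the non-negativity and monotonicity already argued in the text.
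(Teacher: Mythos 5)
First, a framing note: the paper does not actually prove \cref{thm:alg2}; it is cited verbatim from Wolsey (1982), and the appendix only proves \cref{thm:submod} and \cref{thm:submodfhat}. So the comparison here is against the classical argument the citation points to, and your overall plan (telescoping over the optimal set, diminishing returns, geometric decay of a residual, then taking logs) is indeed that standard route. However, your sketch has a genuine gap: the key covering inequality $f(V)-f(S_i)\leq |S^*|\bigl(f(S_{i+1})-f(S_i)\bigr)$ is false in the setting of Problem (\textbf{Q1B}). Telescoping plus submodularity plus the greedy rule give $f(S^*\cup S_i)-f(S_i)\leq |S^*|\bigl(f(S_{i+1})-f(S_i)\bigr)$; to turn the left-hand side into your residual $\rho_i = f(V)-f(S_i)$ you would need $f(S^*\cup S_i)\geq f(V)$, i.e.\ $f(S^*)=f(V)$. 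That holds in Wolsey's full-coverage problem (constraint $f(S)=f(V)$), but in (\textbf{Q1B}) the optimum only satisfies $f(S^*)\geq P$, and typically $f(S^*)<f(V)$. Your stated justification, ``use monotonicity to bound $f(V)\geq f(S^*\cup S_i)$,'' points in the wrong direction: it upper-bounds $f(S^*\cup S_i)$, whereas the argument needs a lower bound matching $f(V)$. A concrete counterexample to your inequality: let $f$ be additive on $V=\{a,b\}$ with $f(\{a\})=f(\{b\})=1$, $f(\emptyset)=0$, and $P=1$. Then $|S^*|=1$ and greedy stops after one step with gain $1$, yet $\rho_0=f(V)-f(\emptyset)=2>1\cdot 1$.

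The repair is to measure the residual against the threshold rather than against $f(V)$: set $\rho_i := P - f(S_i)$, or equivalently run your entire argument on the truncated function $\bar f := \min(f,P)$, which is again monotone and submodular and turns (\textbf{Q1B}) into a full-coverage problem with $\bar f(V)=P$. Monotonicity now gives $f(S^*\cup S_i)\geq f(S^*)\geq P$, so the telescoping yields the correct covering inequality $\rho_i\leq |S^*|\bigl(f(S_{i+1})-f(S_i)\bigr)$, and your geometric-decay step goes through unchanged. Your termination bookkeeping also resolves cleanly in this formulation: $\rho_{\ell-1}>0$ precisely because \cref{alg:greedy2} had not yet terminated at $S_{-1}=S_{\ell-1}$ (so the logarithm is well-defined), and at termination $f(S')\geq P$ gives $f(S')-f(S_{-1})\geq P-f(S_{-1})=\rho_{\ell-1}$, which is the link between the terminal marginal gain and the residual. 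Strictly speaking this route delivers the bound with numerator $P-f(\emptyset)$ and denominator $P-f(S_{-1})=\bar f(S')-\bar f(S_{-1})$; the form displayed in \cref{thm:alg2} is then obtained by the harmless weakening $P-f(\emptyset)\leq f(V)-f(\emptyset)$, with the denominator understood through the truncated function. Your final observation is correct and worth keeping: since the theorem is imported from \cite{Wolsey1982}, the only obligation the paper itself discharges is that $f_\tau$ and $\hat f_\tau$ satisfy the hypotheses, which is exactly what \cref{thm:submod} and \cref{thm:submodfhat} provide.
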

\begin{algorithm}[H]
\SetKwData{Left}{left}\SetKwData{This}{this}\SetKwData{Up}{up}
\SetKwFunction{Union}{Union}\SetKwFunction{FindCompress}{FindCompress}%\SetKwInOut{Input}{Input}\SetKwInOut{Output}{Output}
\textbf{Input:} $P \in [0,1], f$ set-function over subsets of V \\
\textbf{Output:} $S \subset V$ with $f(S) \geq P$, an approximate solution to (Q1B) \\
$S \leftarrow \emptyset$ \;\\
\While{$f(S) < P$}{
$S \leftarrow S \cup \displaystyle \argmax_{v \in V \setminus S}{f(S\cup \{v\})}$}
\Return{S}
\caption{Greedy scheme applicable to (Q1B) when $f=f_\tau$}
\label{alg:greedy2}
\end{algorithm}
Note that in the case of non-Markovian epidemic models and/or large networks, we cannot directly evaluate $f_\tau$, but an approximation $\hat{f}_\tau$. A natural question is whether $\hat f_\tau$ has similar properties as $f_\tau$, so that we can guarantee quality of the optimization.
\begin{theorem}
\label{thm:submodfhat}
The approximation function $\hat{f}_\tau$, defined in \cref{f_hat_def}, is non-negative, monotone and submodular for all $N_R \in \mathbb{N}$
\end{theorem}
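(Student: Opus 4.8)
The plan is to recognize that $\hat f_\tau$ is, up to the constant factor $1/N_R$, a \emph{coverage function}, which is the canonical example of a monotone submodular set function. All three asserted properties then follow from this single structural observation, so the proof reduces to reformulating the estimator and verifying the diminishing-returns inequality.

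First I would rewrite the objective in coverage form. For each node $i \in V$ define the set of simulation runs that node $i$ detects by time $\tau$,
\begin{align}
S_i \colon= \{r \in [N_R] \colon L[r,i] \leq \tau\} \subseteq [N_R]~.
\end{align}
Since $\min_{i \in W} L[r,i] \leq \tau$ holds precisely when some $i \in W$ has $L[r,i] \leq \tau$, i.e., when $r \in \bigcup_{i \in W} S_i$, the estimator becomes
\begin{align}
\hat f_\tau(W) = \frac{1}{N_R}\Bigl|\bigcup_{i \in W} S_i\Bigr|~.
\end{align}
Non-negativity is then immediate, since the right-hand side is a cardinality divided by the positive integer $N_R$. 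Monotonicity is equally direct: if $W \subseteq W'$ then $\bigcup_{i \in W} S_i \subseteq \bigcup_{i \in W'} S_i$, so the cardinalities, and hence the values of $\hat f_\tau$, are ordered accordingly.

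For submodularity I would verify Condition 1 (diminishing returns) directly on the coverage form. Fix $X \subseteq Y \subseteq V$ and $x \in V \setminus Y$. The marginal gain of adding $x$ is the number of runs that $x$ covers but that are not already covered,
\begin{align}
\hat f_\tau(X \cup \{x\}) - \hat f_\tau(X) = \frac{1}{N_R}\Bigl| S_x \setminus \bigcup_{i \in X} S_i \Bigr|~.
\end{align}
Because $X \subseteq Y$ implies $\bigcup_{i \in X} S_i \subseteq \bigcup_{i \in Y} S_i$, the set $S_x \setminus \bigcup_{i \in Y} S_i$ is contained in $S_x \setminus \bigcup_{i \in X} S_i$, so the marginal gain for $Y$ is no larger than that for $X$. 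This is exactly the inequality required by Condition 1, and it holds for every $N_R \in \mathbb{N}$.

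I do not expect any serious obstacle here: unlike \cref{thm:submod}, where submodularity must be argued for the true probability $f_\tau$ governed by the possibly non-Markovian dynamics, the estimator $\hat f_\tau$ is defined directly from the fixed finite matrix $L$, so all the randomness has already been ``frozen'' into the deterministic sets $S_i$. The only point requiring mild care is the bookkeeping of the marginal-gain computation, i.e., writing the increment of a union as the cardinality of its uncovered part. As an alternative route I could instead write $\hat f_\tau = \frac{1}{N_R}\sum_{r=1}^{N_R} g_r$ with $g_r(W) = \mathbf{1}[W \cap C_r \neq \emptyset]$ and $C_r = \{i \in V \colon L[r,i] \leq \tau\}$, check that each indicator $g_r$ is submodular by the same marginal-gain argument, and then invoke the fact that a non-negative combination of submodular functions is submodular.
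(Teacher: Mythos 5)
Your proof is correct and is essentially the same argument as the paper's: your marginal-gain set $S_x \setminus \bigcup_{i\in X} S_i$ is exactly the paper's set of runs $\{r \in [N_R] \colon \min_{k\in X}L[r,k] > \tau \text{ and } L[r,x]\leq \tau\}$, and the containment $S_x \setminus \bigcup_{i\in Y} S_i \subseteq S_x \setminus \bigcup_{i\in X} S_i$ is the paper's implication $\min_{k\in Y}L[r,k] > \tau \implies \min_{k\in X}L[r,k] > \tau$ in different notation. The coverage-function framing is a clean repackaging rather than a genuinely different route.
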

Using this result (proved in \cref{Proofs}), we conclude that the quality guarantees in \cref{thm:alg1}
and \cref{thm:alg2} are also applicable to the approximation function $\hat f_\tau$. 

\subsection{Toy example in a small network}
\label{toy1}
We illustrate our procedures using the graph in \cref{fig:petit}. We use a SIR model with $\beta = 0.5$, $\delta = 0.25$, and a single initially infected node chosen uniformly at random. Setting $\tau = 0.5$, the set of $k = 2$ nodes to be monitored such that $f_\tau$ is maximized is $\{1,5\}$ (circled in black in the figure) with a value of $f_\tau(\{1,5\})=0.442$; in other words, monitoring these two nodes, we are able to detect the epidemic outbreak before $0.5$ time units with a probability equal to $0.442$. This solution is obtained via an exhaustive combinatorial search. If, in contrast, we use the greedy-scheme in \cref{alg:greedy1}, we obtain $\{3,0\}$ as our approximate solution and $f_\tau(\{3,0\}) = 0.438$. \cref{thm:submodfhat} ensures that the greedy solution (i.e., $0.438$) is not worse than $(1-1/e)\times 0.442=0.279$ (notice that the greedy solution is much better than that worst case value).
\begin{figure}[H]
    \centering
    \includegraphics[scale = 0.5]{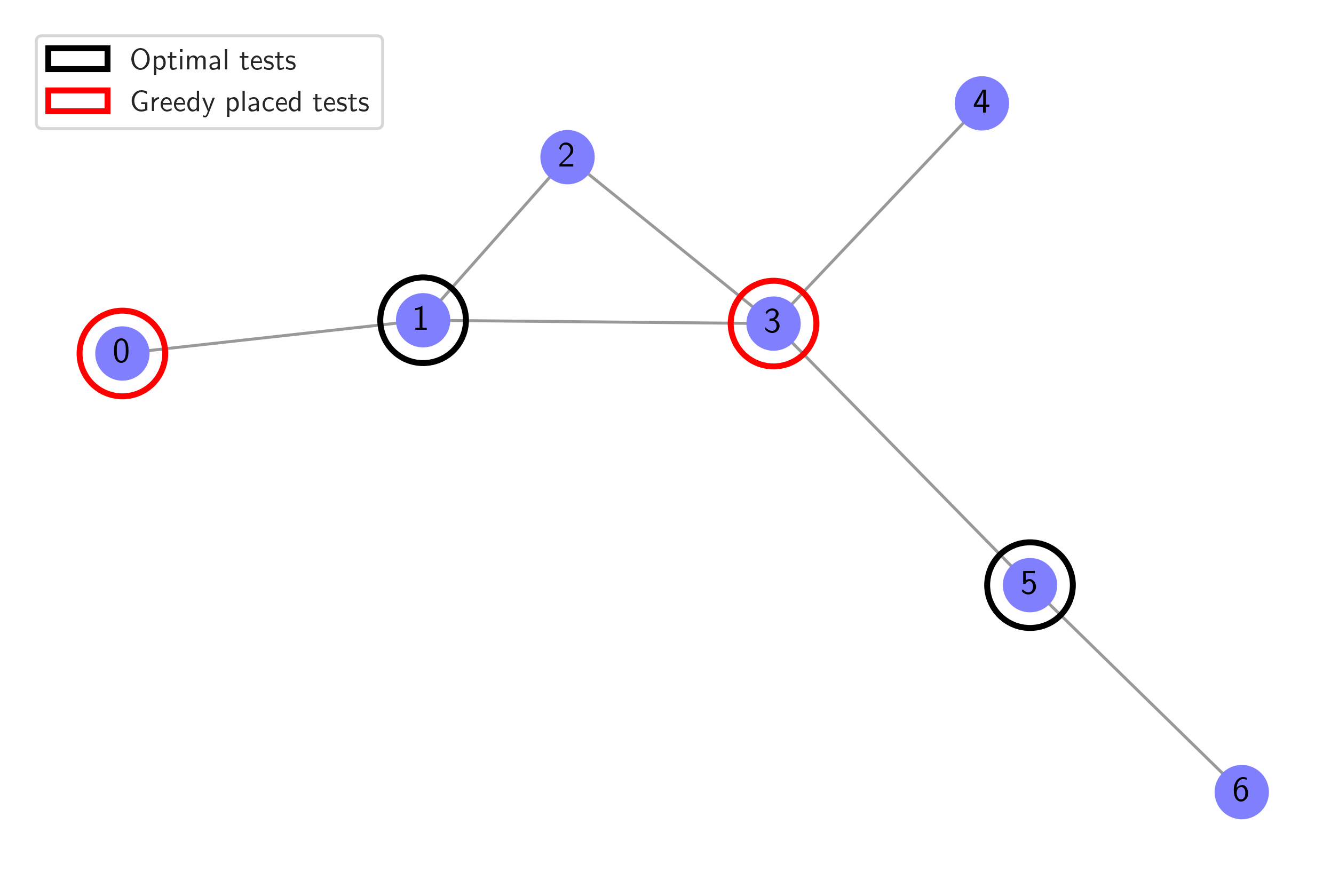}
    \caption{Toy example used in \cref{toy1}, with $n = 7$ nodes. For $\tau = 0.5$, nodes $1$ and $5$ are the optimal set, but a greedy approach selects nodes 3 and 0.}
    \label{fig:petit}
\end{figure}

\section{Estimation of past and current state of the disease}
\label{Inference}
Once an epidemic outbreak has been detected, it is of practical interest to use the results of the tests used during the monitoring phase to estimate the network state of the disease. In this section, estimate the global state of the network using the information obtained from the viral test results retrieved from a subset of nodes. In this direction, given test results for a subset of nodes, we formulate two different subquestions: 

\smallskip

\textbf{Q2A (Patient zero detection):} \emph{What is the probability of each node being patient zero?}

\smallskip

\textbf{Q2B (Outbreak time estimation):} \emph{How much time has passed since the outbreak started?}

\smallskip

\textbf{Q2C (Current network status assessment):} \emph{What is the probability of each individual node being infected?}

\smallskip

Assuming that nodes $v_1,\ldots,v_k\in V$ are our $k$ monitoring nodes, we define the $k$ dimensional vector $O$ such that $O_i=1$ when $v_i$ have tested positive during the monitoring phase and $O_i=0$ otherwise. Hence, assuming that $x_i\in \mathcal{S}$ is the network state in which only node $i$ is infected, \textbf{Q2A} asks us to estimate
\begin{align}
    \mathbb{P}(X(0) = x_i|O) \propto \mathbb{P}(O|X(0)=x_i)\mathbb{P}(X(0) = x_i),\;\forall i \tag{Q2A}
\end{align}
while \textbf{Q2B} asks us about the distribution of the time $t$ since the beginning of the epidemic outbreak conditioned to our observation $O$, i.e., $\PP(t \leq u|O)$ for $u \in \mathbb{R}_+$.

In subproblem \textbf{Q2C}, we aim to estimate $\mathbb{P}(X=x|O)$, where $X$ is the state of the stochastic process at the present time. However, since the number of possible network states grows exponentially with the number of nodes, it is computationally untractable to solve \textbf{Q2C}. Alternatively, we will aim to estimate the $n\times s$ marginal probabilities $\{\PP(S_i = s_j|O)\}_{i=1:n,j=1:s}$, where $S_i\in [s]$ is the current state of node $i$ and $s_j$ is one of the possible $s$ states or compartments.

To estimate solutions to questions \textbf{Q2A}, \textbf{Q2B} and \textbf{Q2C}, we are using the Monte Carlo estimator for conditioned probability, as described in the previous section. Here, $Obs_r\in \{0,1\}^k$ refers to the observation vector $O$ at detection time $t_r$ obtained in the $r$-th run of the Monte Carlo iteration with initial state $X^r(0)$. Hence, the approximate solution for \textbf{Q2A} (Patient zero detection) is the following distribution over initial states:
\begin{equation}
\hat \PP(X(0)=x_i|O) = \dfrac{|\{r\in [N_R] \colon X^r(0) = x_i \cap Obs_r= O\}|}{|\{r\in [N_R] \colon Obs_r = O\}|}.
\end{equation}
For \textbf{Q2B} (outbreak time estimation), our empirical distribution of times depends on how long it took to detect the outbreak in those runs producing an observation $Obs_r = O$, as stated in the following equation:
\begin{equation}
\hat \PP(t\leq k|O) = \dfrac{|\{r\in [N_R] \colon t_r \leq k \cap Obs_r= O\}|}{|\{r\in [N_R] \colon Obs_r = O\}|}.
\end{equation}
Defining $S_i^r$ as the status of node $i$ at the time of detection of run $r$, we have the following approximation for \textbf{Q2C} (current status assessment):
\begin{equation}
\label{q2ceq}
\hat \PP(S_i=s_j|O) = \dfrac{\{|r\in [N_R] \colon S_i^r = s_j \cap Obs_r= O|\}}{|\{r\in [N_R] \colon Obs_r = O\}|}~.
\end{equation}

\subsection{Toy example in a small network}
\label{toy2}
Consider an SIR epidemic model on the network in \cref{fig:petit} using same settings as in \cref{toy1}. Let us assume that we are continuously monitoring nodes 1 and 5. Suppose that the first time a test detects the epidemic, node 5 is infectious. Using our results, we can calculate the posterior distribution of patient-zero probabilities and the time-since-outbreak $t$, which are plotted in \cref{fig:petitzero} and \cref{fig:petittimes}, respectively. The probability of $t=0$ (detection immediately after outbreak) is 0.416, in agreement with the posterior distribution of patient zero in \cref{fig:petitzero}. The expected value of the distribution is 0.760. Finally, in \cref{fig:marginalsonly} the estimated marginal distributions for each node and state can be observed.
\begin{figure}[H]
\begin{minipage}{.45\textwidth}
  \centering
    \includegraphics[height=0.66\textwidth]{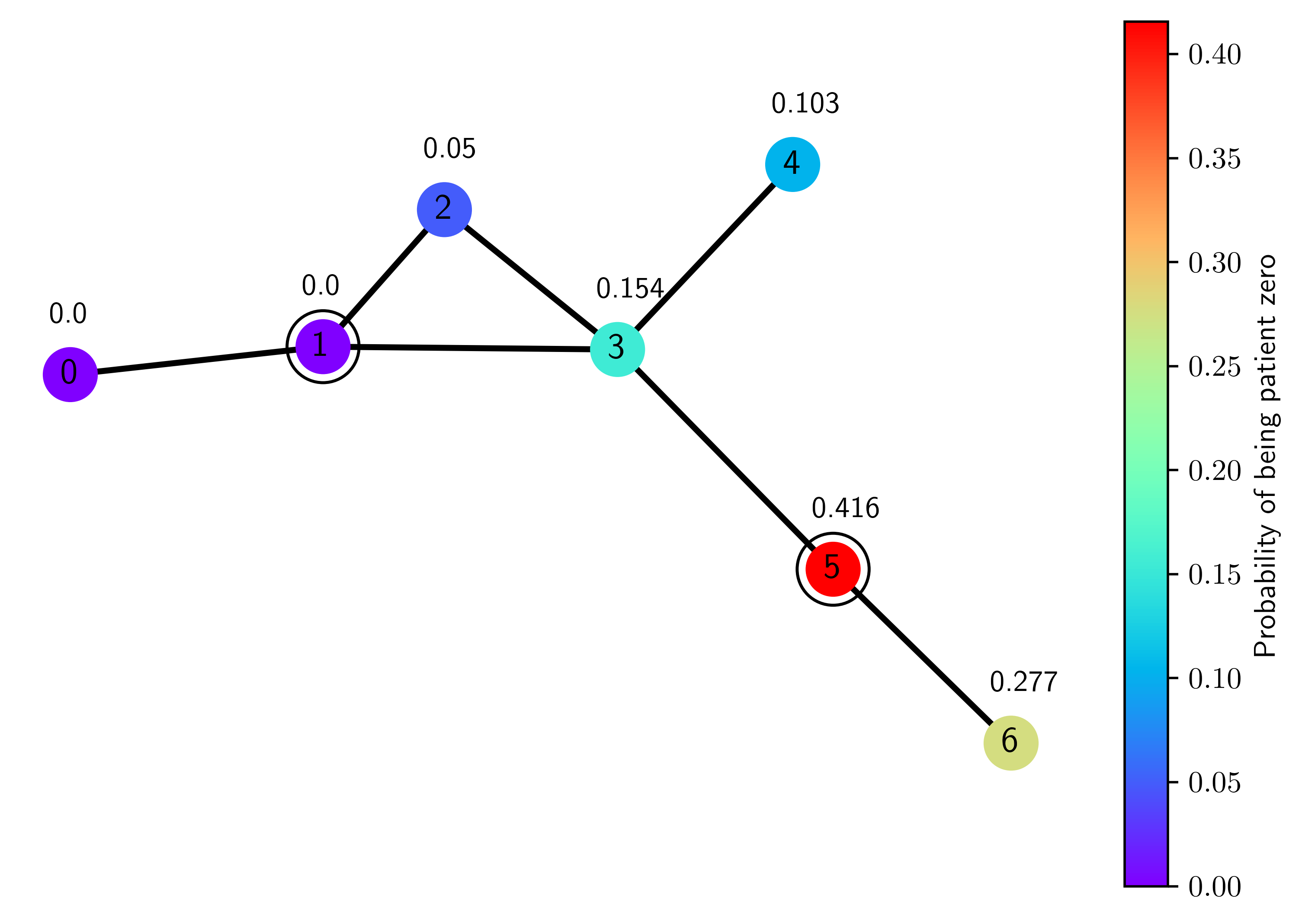}
    \caption{Posterior distribution of patient-zero probabilities for example in \cref{toy2}.}
    \label{fig:petitzero}
\end{minipage}\hfill
\begin{minipage}{.45\textwidth}
    \centering
    \includegraphics[height=0.66\textwidth]{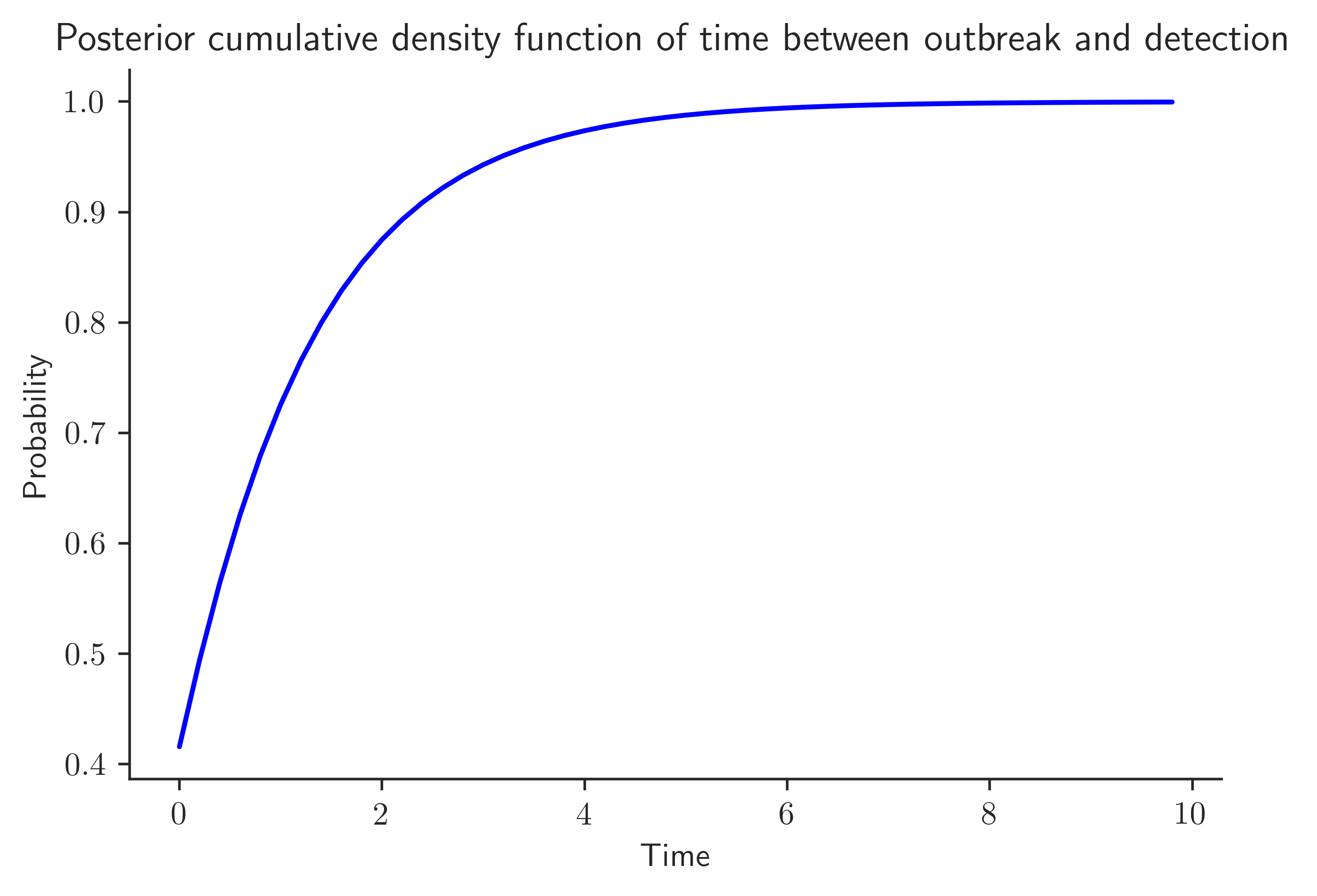}
    \caption{Posterior distribution of time-since-outbreak for the example in \cref{toy2}.}
    \label{fig:petittimes}

\end{minipage}
\end{figure}

\begin{figure}[H]
    \centering
    \includegraphics[width = 0.5\columnwidth]{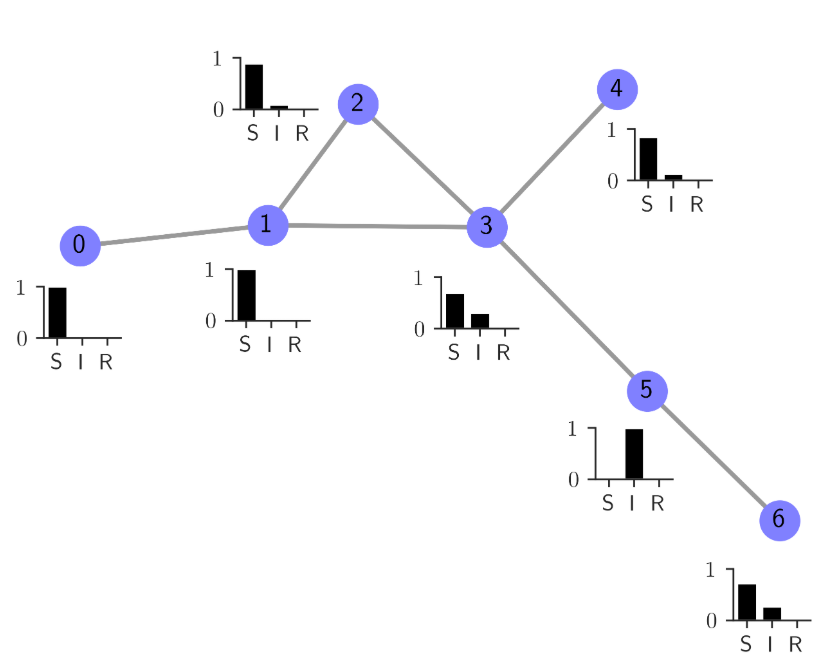}
    \caption{Marginal distributions of the current state of the network after detection, given the observation of node 5 as the first infectious between nodes 1 and 5}
    \label{fig:marginalsonly}
\end{figure}

\section{Dynamic allocation for epidemic tracking}
\label{DynamicAllocation}
In a practical scenario, once an epidemic is detected, we may be interested in retrieving as much information as possible about the state of the epidemics. In this section, we address this problem by sequentially testing nodes over time, where we have the freedom to choose the location of new tests. Let $t_0$ be the time at which an epidemic outbreak is first detected and consider that, afterwards, we are able to perform a number of new tests at times $t_1 < t_2 < \ldots < t_n$. Question \textbf{Q3}, stated below, is concerned with the design of a testing strategy aiming to maximize the amount of information extracted about the state of the disease using this series of tests.

\smallskip

\textbf{Q3 (Optimal dynamic test allocation after detection):} \emph{Assuming that we are free to sequentially allocate a number of tests at different times $\{t_i\}_{i=0}^n$, which nodes should we test at each time to maximize the `information' about the state of the disease? We assume that the tests have a known specificity $S_p$ and sensitivity $S_n$.}

\smallskip

To formalize this question, we use the classical concepts of entropy and mutual information. At the time of (first) detection, our distribution of states $X$ has an entropy of $H(X|O) = - \sum_{x\in \mathcal{S}}\PP(X=x|O)\log\PP(X=x|O)$, where $O$ is the observations at detection time. As described above, we aim to minimize the entropy of $X$ by performing a series of tests to a fixed number of nodes. Assume we test the subset of nodes $W'$. We denote by $T_{W'}$ the outcome of these $|W'|$ tests, taking values in $\{0,1\}^{|W'|}$. Thus, the average entropy of $X$ after testing $W'$ is:
\begin{align*}
H(X|O,&T_{W'}) =  \sum_{i\in\{0,1\}^{|W'|}}\PP(T_{W'}=i)H(X|O, T_{W'}=i)\\
         =& - \sum_{i\in\{0,1\}^{|W'|}}\PP(T_{W'}=i)\sum_{x\in \mathcal{S}}\PP(X=x|O, T_{W'}=i)\log\PP(X=x|O, T_{W'}=i). \numberthis~
\end{align*} 
The problem we aim to solve is the maximization of the mutual information
\begin{align}
\argmax_{W'\subset V} I(X|O;T_{W'}) = \argmax_{W'\subset V} H(X|O) - H(X|O,T_{W'}) ~.
\end{align}
As mentioned before, computing the full distribution $\PP(X=x|O)$ is unfeasible due to the exponential size of the probability space;hence, we propose below an approximation to this problem using the information we have from solving previous questions. Assume that we have access to an estimation of the marginals $\{\PP(S_i = s_j|O)\}_{i,j}$. Since we have no more information about the distributions, we can assume $\{S_i\}_{i=1}^n$ are independent to obtain
%\begin{align}
$\PP(S_1 = s_1, S_2 = s_2, \ldots, S_n = s_n | O) = \prod_{i=1}^n\PP(S_i = s_i|O)~. $
%\end{align}
Making the (approximate) assumption that the testing process does not destroy this independence, we obtain
\begin{align}
&H(X|O, T_{W'}) = H((S_1, S_2, \ldots, S_n)|O, T_{W'}) = \sum_{i\in W'}H(S_i|O, T_{W'}) + \sum_{i\not\in W'}H(S_i|O)~,
\end{align}
where the last equality comes from the fact that a test on $W'$ does not change our knowledge of other nodes under the independence assumption. Hence,  
\begin{align}
H(X|O) - H(X|O,T_{W'}) =& \sum_{i=1}^{n}H(S_i|O) - \sum_{i\in W'}H(S_i|O, T_{W'}) - \sum_{i\not\in W'}H(S_i|O) \\ 
%=& \sum_{i\in W'}(H(S_i|O)- H(S_i|O, T_{W'})) \\
=& \sum_{i\in W'}(H(S_i|O)- H(S_i|O, T_{\{i\}})) \numberthis~.
\label{defMI1}
\end{align}
This means that we can effectively rank the nodes, with a score equal to $H(S_i|O)-H(S_i|O, T_{\{i\}}) \geq 0$, and include in $W'$ the desired number of nodes of the highest score. This score can be computed with our estimated marginals from \textbf{Q2C}, as follows. On the one hand, by the definition of entropy, we have that
\begin{equation}
\label{defMI2}
H(S_i|O) = -\sum_{j=1}^{s} \PP(S_i = s_j|O)\log\PP(S_i=s_j|O)~.
\end{equation}
On the other hand, $H(S_i|O, T_{\{i\}})$ depends on the capacity of the tests to distinguish states (determined by their sensitivity and specificity). In particular, we consider (binary) tests with known specificity $S_p$ and sensitivity $S_n$. We denote by $+, -$ the positive and negative results in one such test, and $S_i \in D$ the event that node i is a detectable state. Under these assumptions, the expected entropy after the test is given by:
\begin{equation}
\label{defMI3}
H(S_i|O, T_{\{i\}}) = \PP(T_{i} = +) H(S_i|O, T_{\{i\}}=+) +  \PP(T_{i} = -) H(S_i|O, T_{\{i\}}=-)
\end{equation}
We can calculate each one of the terms in this expression as follows: $\PP(T_{i} = +) = S_n\PP(S_i \in D) + (1-S_p)(1-\PP(S_i \in D))$ and $\PP(T_{i} = -) = (1-S_n)\PP(S_i \in D) + S_p (1-\PP(S_i \in D)) = 1-\PP(T_{i} = +)$. Also, $H(S_i|O, T_{\{i\}}=+)$ and $H(S_i|O, T_{\{i\}}=-)$ are the entropies of Bernoulli variables with parameters $\PP(S_i \in D|T_{i} = +)$ and $\PP(S_i \in D|T_{i} = -)$. Finally, $\PP(S_i \in D|T_{i} = +) = \frac{S_n\PP(S_i \in D)}{\PP(T_i=+)}$ and  $\PP(S_i \in D|T_{i} = -) = \frac{(1-S_n)\PP(S_i \in D)}{\PP(T_i=-)}$ from a direct application of Bayes' Formula. \\

The computation of $H(S_i|O)- H(S_i|O, T_{\{i\}})$ reveals that, if the test deviates from the ideal test with $S_p = S_n = 1$, then the optimal distribution to test deviates from the uniform $\PP(+)=\PP(-)=0.5$. The optimal $\PP(+)$ solves a transcendental equation as a function of $S_p$ and $S_n$. As an example, if $S_n = 0.957$ and $S_p = 0.99$ (as obtained from real data in the Autobio Diagnostics Co. RDT IgM Covid-19 tests \cite{GlobalPr97:online}), the probability of testing positive before the test resulting in a maximal mutual information is $\PP(+) = 0.4836$. Using this procedure we can rank the nodes and choose the top $k_i$ scores to allocate $k_i$ tests at the time instance $t_i$. The full procedure alternates between testing and simulations to update the current status of the network and dynamically decide on subsequent tests. After tests at time $t_i$ are taken, we define $D_i$ as the distribution over nodes conditioned on the test results. If the process is Markov, the conditions needed to start the model are just the status of all nodes. For non-markovian processes, other elements, such as how much time a node has been in its status, need to be considered. We include these under $D_i$, understanding that we sample all the values needed to uniquely determine the system current status and evolution. We use the Monte Carlo simulator with initial states sampled from $D_i$ until $t = t_{i+1}-t_i$. After we have enough runs, one can estimate the marginal distributions using \cref{q2ceq}, then use \cref{defMI1}, \cref{defMI2}, and \cref{defMI3} to decide where the new $k_{i+1}$ tests are allocated, and then calculate $D_{i+1}$ using the results from the test. This procedure is conceptually similar to particle filtering or sequential Monte Carlo methods, in which measurements of reality are combined sequentially with a simulator of the associated dynamics.

\subsection{Toy example in a small network}

\cref{fig:petitmarginals} shows the process of adaptive testing in the case of the toy example in \cref{fig:petit}. We start with the marginals in \cref{fig:marginalsonly}, which are used to calculate their entropies and decide the optimal nodes to test next. After the test is taken, one uses the information to update the marginals and use them to sample the initial condition for subsequent Monte Carlo runs, in which we simulate the stochastic process until the time in which we are allowed to allocate more tests (e.g., every week). After that, we can update the marginals, decide on next tests to take, and repeat the cycle.
\begin{figure}[H]
    \centering
    \includegraphics[width = 0.95\textwidth]{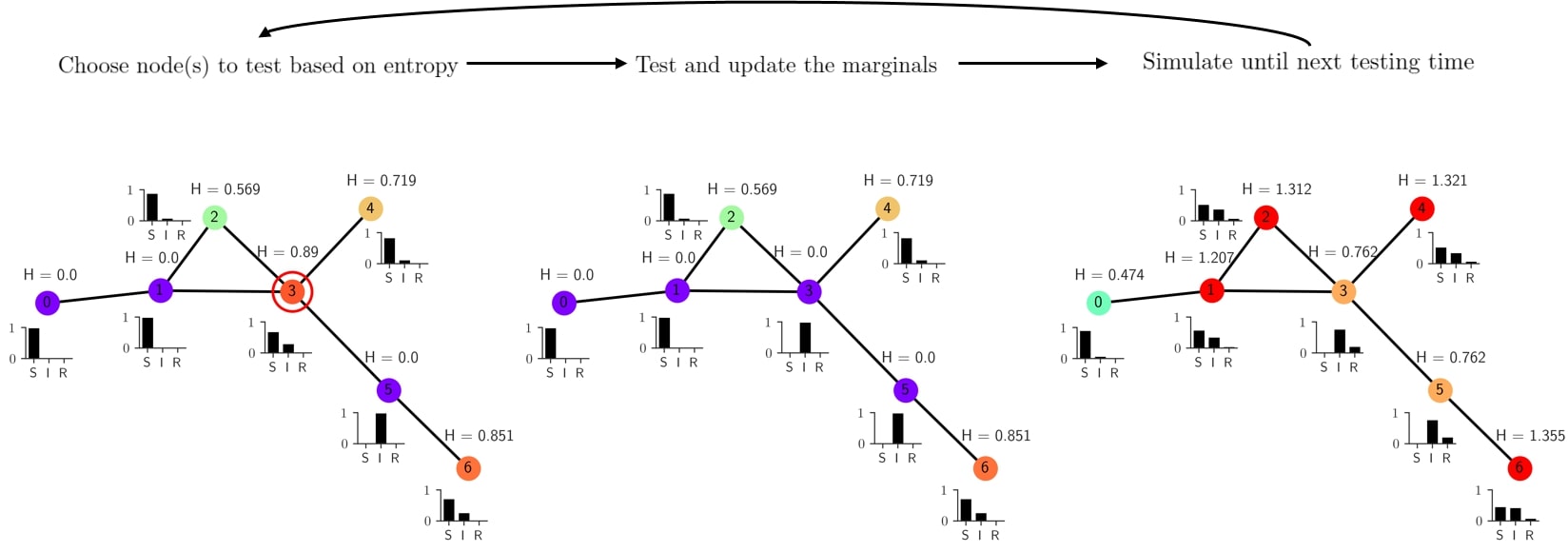}
    \caption{Dynamic testing process, consisting on alternating testing and simulation, applied to the toy example network}
    \label{fig:petitmarginals}
\end{figure}
\section{Experiments}
\label{Experiments}
In this section we illustrate the proposed procedures to a non-markovian model of Covid-19 in a real human interaction network. In order to simulate the stochastic processes, we use an event-driven simulation algorithm in which the next events (infections, recoveries, etc.) are stored in a priority queue and processed in order of time \cite{Kiss2017}. This method can be used to efficiently simulate many stochastic models such as the model studied here. We use the Hypertext 2009 network \cite{konect:2016:sociopatterns-hypertext}, a network of human-to-human interactions, in our simulations. The ACM Conference on Hypertext and Hypermedia 2009 was held in Turin, Italy in 2009 and, during the conference, the conference badges included Radio-Frequency Identification (RFI) devices able to mine face-to-face proximity relations \cite{Isella2011}. The exchange of radio packets between badges implies a proximity of less than 1-1.5 $m$, a distance in which contagious diseases could spread. In this network, a node represents a conference visitor and an edge represents a face-to-face contact that was active for at least 20 seconds. The network has $n = 100$ nodes and $m = 946$ edges once we aggregate edges over time during the first day of the conference.

For this network, we use an adaptation to networks of a realistic non-markovian model of Covid-19 proposed in \cite{Ferretti2020}. In this work, the authors infer that, for Covid-19, the incubation period (time between contracting the disease and showing symptoms) follows a lognormal distribution with meanlog 1.644 and sdlog 0.363, and the generation time (time between infection of the source and infection of the tarauthor follows a Weibull distribution with shape parameter 2.826 and scale parameter 5.665. Additionally, the authors infers that the proportion of infectious individuals who are asymptomatic is 0.4, and that asymptomatic transmission rate is 10 times lower than for symptomatic patients. We use this data to create a non-markovian model with susceptible, presymptomatic, symptomatic, asymptomatic, and removed compartments in which each node can independently infect its neighbours as long as it is not susceptible nor removed. The times for that infection to occur and symptoms to appear is sampled from the distributions in \cite{Ferretti2020}. We draw the random time to full recovery from first infection to removal from a normal distribution of mean 14 and standard deviation 2, both for symptomatic and asymptomatic carriers. The model is summarized in \cref{fig:Covidmodel}.
\begin{figure}[!htbp]
    \centering
    \includegraphics[width = 0.7\columnwidth]{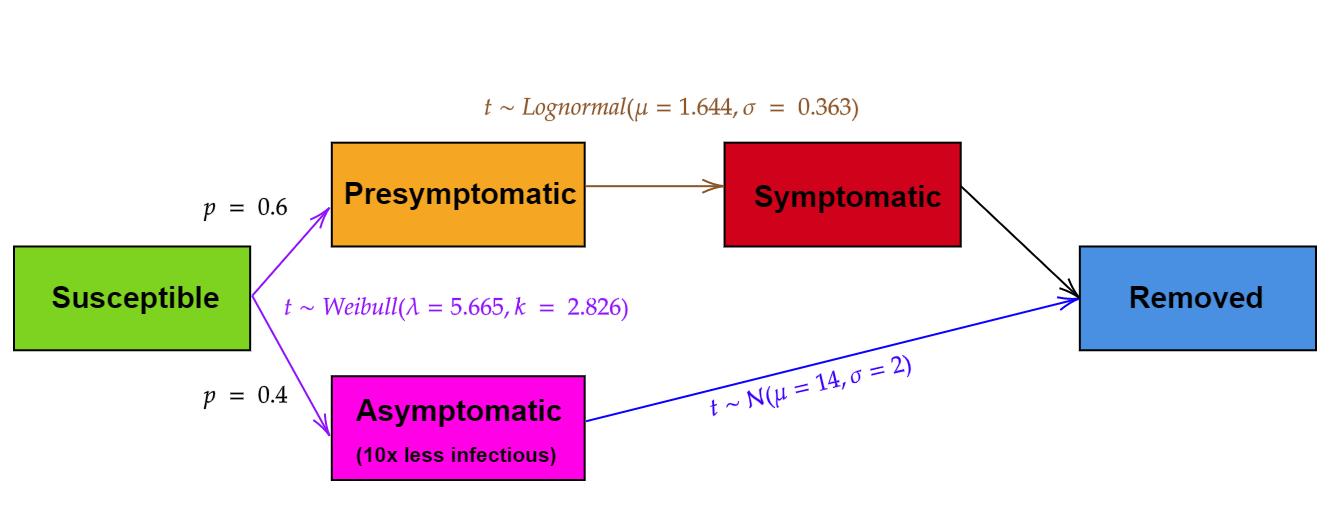}
    \caption{Summary of the non-markovian model of Covid-19 spreading in networks based in \cite{Ferretti2020}}
    \label{fig:Covidmodel}
\end{figure}

We assume that the outbreak is started by a single infectious node chosen uniformly at random. We then monitor $k = 10$ nodes decided according to the greedy scheme in \cref{alg:greedy1}, which aims to maximize the probability of detection during the first $\tau = 3$ days of the outbreak. We test the greedy algorithm against three simple baselines: uniformly random node subset selection, random node subset selection weighted by node degree, and random node selection eliminating neighbors from chosen nodes iteratively. \cref{fig:distrgreedy} summarizes the results, where the greedy algorithm outperforms all the strategies by around 3$\%$ in probability. In order to understand how this translates to real scenarios in practice, we run $10^5$ simulations for the greedy test placement and $10^5$ for the best randomly found placement in which we set lockdown measures as soon as the epidemic is detected in each case. The curves of infectious (asymptomatic + symptomatic + presymptomatic) and recovered nodes for each case can be seen in \cref{fig:CVcurves}. On average, 4 nodes out of the 100 nodes do not contract the disease by using the greedy placement instead of the best random placement. 

\begin{figure}[H]
\begin{minipage}{.45\textwidth}
  \centering
    \includegraphics[width=.95\columnwidth]{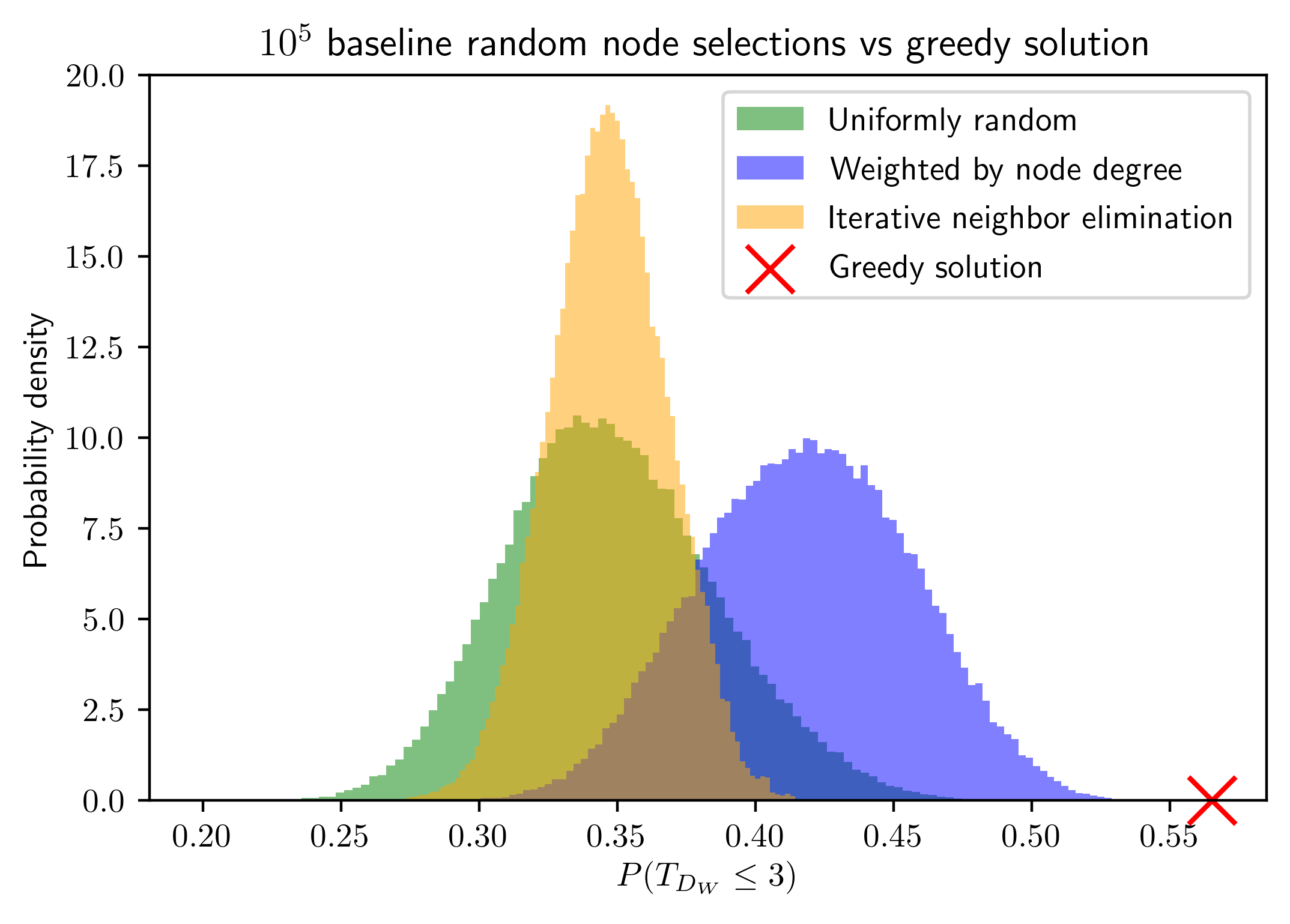}
    \caption{Histogram of results after using the three baseline random placement strategies in comparison to the greedy algorithm in order to place $k = 10$ tests. The greedy algorithm scores a detection probability of 0.57 while the best random solution scores 0.542}
    \label{fig:distrgreedy}
\end{minipage}\hfill
\begin{minipage}{.45\textwidth}
    \centering
    \includegraphics[width=.95\columnwidth]{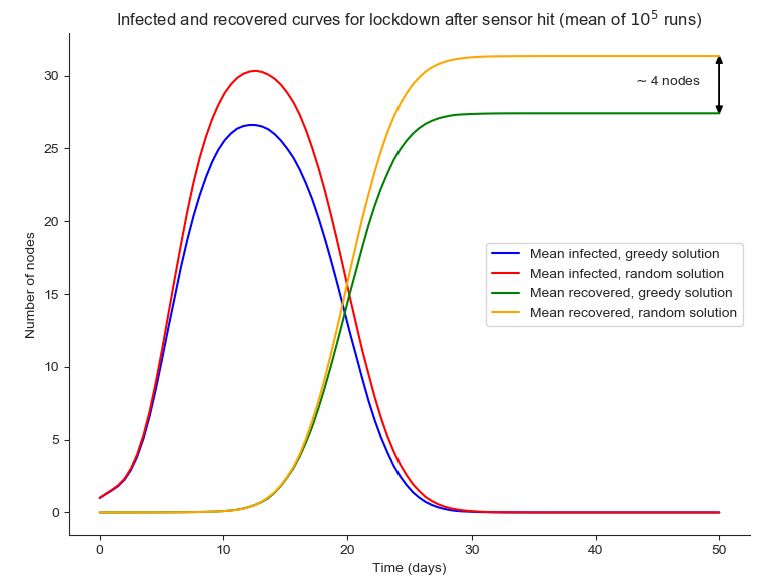}
    \caption{Infectious (asymptomatic + symptomatic + presymptomatic) and recovered mean curves for $10^5$ runs in which lockdown is imposed once detecting the epidemic. The small probabililty gain of using the set of sensors found with the greedy strategy translates to 4 less nodes being infectious on average}
    \label{fig:CVcurves}

\end{minipage}
\end{figure}

%\begin{figure}[H]
%    \centering
%    \includegraphics[scale = 0.2]{curvesfinal.png}
%    \caption{Histogram of results after using the two baseline random placement strategies in comparison to the greedy algorithm in order to place $k = 10$ sensors. The greedy algorithm score a detection probability of 0.57 while the best random solution scores 0.542}
%    \label{fig:distrgreedy}
%\end{figure}
%\begin{figure}[H]
%    \centering
%    \includegraphics[scale = 0.5]{CVCurves.JPG}
%    \caption{Infected (assymptomatic + symptomatic + presymptomatic) and recovered mean curves for $10^5$ runs in which lockdown is imposed once detecting the epidemic. The small probabililty gain of using the set of sensors found with the greedy strategy translates to 4 less nodes being infected on average}
%    \label{fig:CVcurves}
%\end{figure}

A similar analysis as the one performed in the toy example in \cref{toy2} can be performed to estimate patient zero and results can be seen in \cref{fig:pat0covid}. The most likely node is in this case the node in which the epidemic was detected, which is placed in the middle of the graph in \cref{fig:pat0covid}. Similarly as before, we can estimate the probability density function of time-since-outbreak at the time of detection. Here, we do it for three different kinds of tests: tests that detect antibodies (meaning all kind of non-susceptible nodes), `tests' that detect symptoms only, and `tests' that detect removed people only. These two last cases correspond to the cases in which epidemic outbreaks are detected late instead of using actual viral tests, simulating scenarios in which countries or populations are unprepared for an outbreak and can only detect it after the first death (or person with symptoms) is detected. The results can be seen in \cref{fig:timedistrs}, in which we can observe that the difference is of the order of several days in each case.

\begin{figure}[H]
\begin{minipage}{.45\textwidth}
    \centering
    \includegraphics[width=\columnwidth]{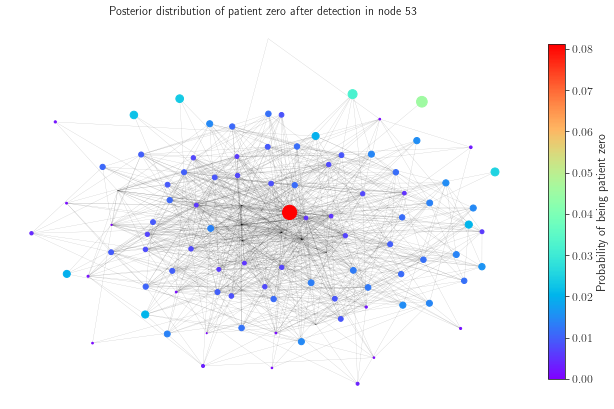}
    \caption{Probability distribution of being patient zero, after detection in node 53 (in the center of the graph) is taken into account. Node size is proportional to probability}
    \label{fig:pat0covid}\end{minipage}\hfill
\begin{minipage}{.45\textwidth}
    \centering
    \includegraphics[width=\columnwidth]{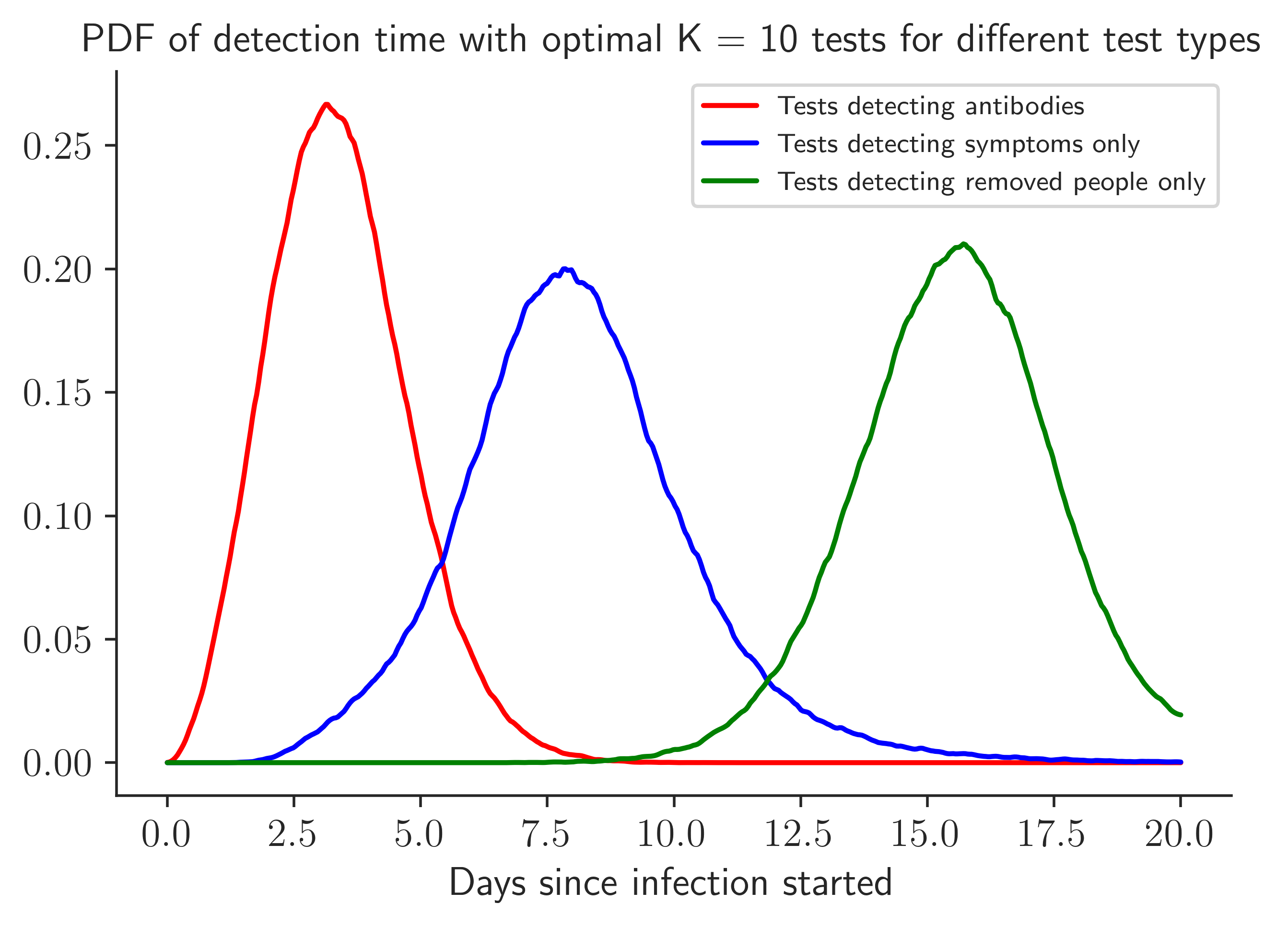}
    \caption{Probability density functions of times between disease outbreak and detection for optimal monitorization of $k = 10$ nodes using tests of different type, corresponding to being able to test people, detecting just the symptoms or just the death of patients}
    \label{fig:timedistrs}
\end{minipage}
\end{figure}

Finally, we illustrate the proposed adaptive testing algorithm with a fixed amount of tests at $t=t_0$ and every 3 days, up to 4 times. We compare it with a baseline algorithm of randomly selecting which nodes to choose at each iteration. In this comparison, testing is only used to monitor the epidemic (i.e no lockdown measures are imposed regardless of test results). In \cref{fig:Dynamictesting}, the classification accuracy compared to the real run (taking the most likely class in the estimated marginals as the prediction) and mean entropy of the predicted marginal distributions are plotted over time, averaging over $10^4$ different real runs. As expected, more tests translates to higher accuracy and distributions with less entropy. By comparing testing strategies, it can be observed how the proposed strategy maintains the classification accuracy but improves on the uncertainty that the distributions convey. The time of higher uncertainty is around 6 days after detection, as there are more possible scenarios of the current state of the pandemic than later on. The reason for this is that, in this non-lockdown scenario, after a certain point most of the nodes will most likely have been infected.
\begin{figure}[H]
    \centering
    \includegraphics[width = 0.9\columnwidth]{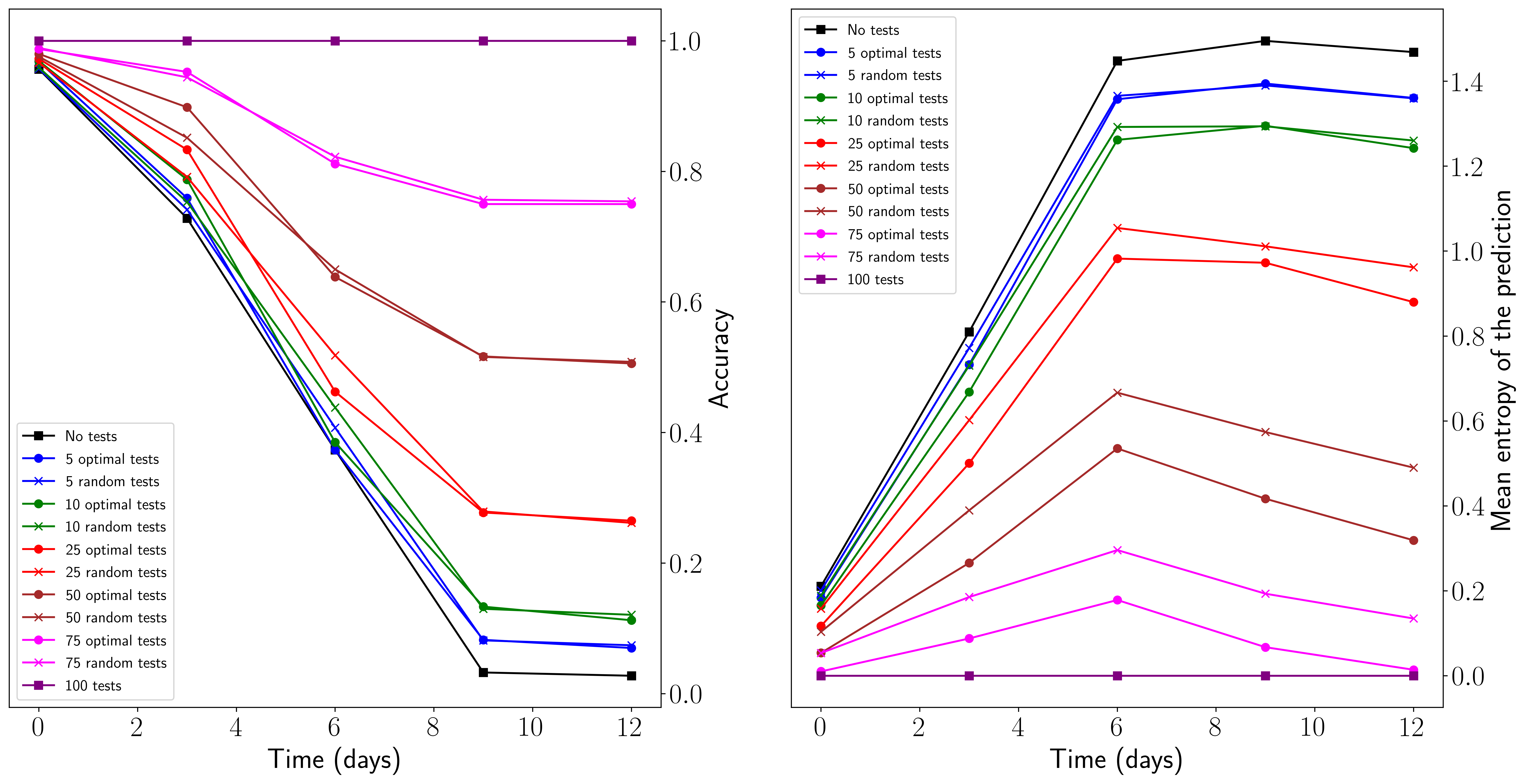}
    \caption{Accuracy and mean entropy time evolution in the dynamic testing scenario in which a fixed number of tests are used every 3 days, up to 12 days after the outbreak is detected. Averages of $10^4$ runs}
    \label{fig:Dynamictesting}
\end{figure}
%What is between begin comment and end comment is old

%\begin{figure}[H]
%\end{figure}

% We present a general framework for adaptive allocation of viral tests in social contact networks. 
% We pose and solve the following two complementary problems. First, we consider the design of a social sensing system whose objective is the early detection of a novel epidemic outbreak. In particular, we propose an algorithm to select a subset of individuals to be tested in order to detect the onset of an epidemic outbreak as fast as possible. We pose this problem as a hitting-time minimization problem and use submodularity optimization techniques to derive explicit quality guarantees for the proposed solution.
% Second, once an epidemic outbreak has been detected, we consider the problem of adaptively distributing viral tests over time in order to maximize the information gained about the current state of the epidemic. We formalize this problem in terms of information entropy and mutual information and propose an adaptive allocation strategy with quality guarantees.
% For these problems, we derive analytical solutions for any stochastic compartmental epidemic model with Markovian dynamics, as well as efficient Monte-Carlo-based algorithms for non-Markovian dynamics.
% Finally, we illustrate the performance of the proposed framework in numerical experiments involving a model of Covid-19 applied to a real human contact network.

\section{Conclusions}

We have introduced a flexible framework to analyze problems concerning the early detection of epidemic outbreaks, as well as the dynamic allocation of tests to maximize the information retrieved about the state of the infection. The proposed framework is able to handle any discrete-time compartmental stochastic process and tests with a given specificity and sensitivity. We have stated and solved several problems of practical interest by analyzing continuous-time stochastic compartmental models over complex networks. First, we have considered the problem of designing a monitoring system whose objective is to detect a novel epidemic outbreak as soon as possible. In particular, we have developed an algorithm able to select a subset of individuals to be continuously monitored in order to detect the onset of an epidemic as fast as possible. We have mathematically described this problem as a hitting-time probability maximization and use submodularity optimization techniques to derive explicit quality guarantees for the proposed solution.
Second, assuming that an epidemic outbreak has been detected, we have also considered the problem of dynamically allocating viral tests over time in order to maximize the amount of information gained about the state of the epidemic. We  have proposed an adaptive allocation strategy with quality guarantees based on the concepts of information entropy and mutual information. For all these problems, we have derived analytical solutions for Markovian stochastic compartmental models, as well as efficient Monte-Carlo-based algorithms for non-Markovian dynamics and large-scale networks. We have illustrated the performance of the proposed algorithms using numerical experiments involving a model of Covid-19 applied to a real human contact network.

\appendix
\section{Analytical solution for Markov chains}
\label{Markovchains}
In this section we provide analytical solutions for the questions from \cref{sec:intro} in the case the Markov Property holds and the epidemic model defines a continuous-time Markov chain. 
Similarly as in the general case, the epidemic model can be formulated as a continuous time Markov chain of state space $\mathcal{S}$ with $|\mathcal{S}| = s^n$ if the Markov property holds. An initial probability distribution $D$ such that $X(0) \sim D$ is also assumed. The continuous-time Markov chain is characterized by a transition rate matrix $Q$ of dimensions $s^n \times s^n$.

\subsection{Early detection of epidemic outbreaks with limited viral tests}

We use the same notation as in the general case: $T_{D_W}$ is the minimum time in which the Markov chain reaches the detection set of $W \subset V$. We can analytically calculate $\mathbb{P}\left(T_{D_W} \leq \tau \right) = \mathbb{P}\left(T_{D_W} \leq \tau \right | T_{D_W} < \infty) \mathbb{P}\left(T_{D_W} < \infty\right)$, where $\mathbb{P}\left(T_{D_W} < \infty\right)$ refers to the probability of the Markov Process ending in an absorbing state in $D_W$. Some models, such as SIR, have absorbing states which represent the epidemic "dying" before any detection is done (for example, the first person gets cured before transmitting it to anyone), which might therefore not be part of $D_W$. To treat that, we use the jump matrix of the Markov chain. If $Q$ is the transition rate matrix of a continuous-time and discrete space-Markov chain and we are at state $i$, the probability of the next jump of the Markov chain being to state $j \neq i$ is $\frac{Q_{ij}}{-Q_{ii}}$. This lets us define the jump matrix $M$, where $M_{ij} = \dfrac{Q_{ij}}{-Q_{ii}}$ if $i \neq j$ and $M_{ii} = 1-\sum_{i\neq j}M_{ij}$, which corresponds to the matrix of a discrete time Markov chain describing the jumps the Markov chain makes without taking into account how long it takes to do such jumps. For calculations such as the probability of being absorbed in a particular state, we can use the jump matrix and theory from discrete-time absorbing Markov chains. In particular, we can calculate $\PP\left(T_{D_W} < \infty\right)$ from the fundamental matrix of the jump matrix created from $Q$.

\paragraph{Conditional absorption Markov chain theory}
To calculate $\mathbb{P}\left(T_{D_W} \leq \tau \right | T_{D_W} < \infty)$ we need to eliminate the absorbing states not in $D_W$ from the chain, which we denote $S^-$. The dynamics of those runs that do not end in $S^-$ are Markovian, and its rate transition matrix can be found according to the following proposition.

\begin{prop}
Given a Markov chain with two sets of absorbing states $S^+$ and $S^-$ and a rate transition matrix $Q$, one can construct a matrix $Q^+$ corresponding to the Markovian dynamics of those processes that get absorbed at $S^+$ (which we can write as $X_\infty \in S^+$) as follows:
\begin{equation}
Q^+_{ij} = 
\begin{cases}
 Q_{ij} \dfrac{\mathbb{P}\left(X_\infty \in S^+|X(0) = j\right)}{\mathbb{P}\left(X_\infty \in S^+|X(0) = i\right)} \quad &\mathbb{P}\left(X_\infty \in S^+|X(0) = i\right) \neq 0 \\
 0 \quad &\mathbb{P}\left(X_\infty \in S^+|X(0) = i\right) = 0
\end{cases}
\end{equation}
Furthermore, the initial probability distribution $\{\PP(X(0) = i)\}_i$ also gets modified: 
\begin{equation}
\PP(X(0) = i|X_\infty \in S^+) = \dfrac{\PP(X_\infty \in S^+|X(0) = i)\PP(X(0) = i)}{\sum_j \PP(X_\infty \in S^+|X(0) = j)\PP(X(0) = j)}
\end{equation}
\end{prop}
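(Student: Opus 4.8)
The plan is to recognize this construction as the Doob $h$-transform of the chain by its absorption probability and to verify it directly through the conditioned finite-dimensional distributions. Write $h(i) = \mathbb{P}(X_\infty \in S^+ \mid X(0)=i)$ for the probability of absorption into $S^+$, so that $h \equiv 1$ on $S^+$, $h \equiv 0$ on $S^-$, and the proposed off-diagonal formula reads $Q^+_{ij} = Q_{ij}\,h(j)/h(i)$ wherever $h(i)\neq 0$. The first step is to establish that $h$ is harmonic for the generator, i.e. $\sum_j Q_{ij} h(j) = 0$ at every transient state $i$. This follows from first-step (embedded-chain) analysis: from a transient state $i$ the next jump lands in $j\neq i$ with probability $Q_{ij}/(-Q_{ii})$, after which the absorption probability is $h(j)$, giving $h(i) = \sum_{j\neq i} \frac{Q_{ij}}{-Q_{ii}}\,h(j)$, which rearranges to $\sum_j Q_{ij} h(j) = 0$.

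With harmonicity in hand, I would check that $Q^+$ is a genuine rate matrix. Off-diagonal entries $Q^+_{ij}=Q_{ij}h(j)/h(i)$ are nonnegative since $Q_{ij}\ge 0$ for $i\neq j$ and $h\ge 0$; states with $h(i)=0$ cannot lie on any trajectory reaching $S^+$, so zeroing their rows is harmless and the conditioned chain restricts to $\{i : h(i)>0\}$. For the row sums, harmonicity gives $\sum_{j\neq i} Q_{ij}h(j) = -Q_{ii}h(i)$, whence $\sum_{j\neq i} Q^+_{ij} = \frac{1}{h(i)}\sum_{j\neq i} Q_{ij}h(j) = -Q_{ii}$. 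In particular the diagonal is unchanged, $Q^+_{ii}=Q_{ii}$, so the single formula $Q^+_{ij}=Q_{ij}h(j)/h(i)$ is in fact valid for $i=j$ as well, and rows sum to zero.

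The heart of the argument is to show that the process conditioned on $\{X_\infty\in S^+\}$ is Markov with this generator. Let $P_{ij}(t)$ denote the original transition semigroup. For times $0=t_0<t_1<\cdots<t_n$ and states $i_0,\ldots,i_n$ with positive absorption probability, I would compute the conditioned finite-dimensional distribution via Bayes' rule and the Markov property, using that $\{X_\infty\in S^+\}$ is determined by the terminal behavior and is, conditionally on $X(t_n)$, independent of the path up to $t_n$:
\[
\mathbb{P}\big(X(t_1)=i_1,\ldots,X(t_n)=i_n \mid X(0)=i_0,\, X_\infty\in S^+\big) = \frac{h(i_n)}{h(i_0)}\prod_{k=1}^{n} P_{i_{k-1}i_k}(t_k-t_{k-1}).
\]
Telescoping the ratio as $h(i_n)/h(i_0)=\prod_{k=1}^n h(i_k)/h(i_{k-1})$ shows this equals $\prod_{k=1}^n P^+_{i_{k-1}i_k}(t_k-t_{k-1})$ with $P^+_{ij}(t):=P_{ij}(t)\,h(j)/h(i)$. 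Hence the conditioned process is Markov with semigroup $P^+$; that $P^+$ is indeed a semigroup follows from Chapman--Kolmogorov via $\sum_k P_{ik}(s)P_{kj}(t)=P_{ij}(s+t)$ (the factor $h(k)/h(k)$ cancels), and differentiating $P^+_{ij}(t)$ at $t=0$ yields the generator $Q^+_{ij}=\frac{h(j)}{h(i)}Q_{ij}$, as claimed. The modified initial distribution is then immediate from Bayes' rule with normalizer $\mathbb{P}(X_\infty\in S^+)=\sum_j h(j)\,\mathbb{P}(X(0)=j)$.

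The main obstacle I anticipate is the potential breakdown of the Markov property when conditioning on a future (terminal) event. The crux is justifying the factorization $\mathbb{P}(X(t_1)=i_1,\ldots,X(t_n)=i_n,\,X_\infty\in S^+\mid X(0)=i_0)=\big[\prod_k P_{i_{k-1}i_k}(t_k-t_{k-1})\big]\,h(i_n)$, namely that conditionally on $X(t_n)$ the absorption event is independent of the trajectory on $[0,t_n]$. This is precisely where the Markov property of the original chain is invoked, and it is what legitimizes the $h$-transform despite conditioning on the chain's ultimate fate.
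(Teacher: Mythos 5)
Your proof is correct, and it takes a genuinely different (and in some respects more complete) route than the paper's. The paper works infinitesimally: it writes $Q^+_{ij}$ as the limit $\lim_{h\to 0}\bigl(\mathbb{P}(X(h)=j\mid X(0)=i,\,X_\infty\in S^+)-\delta_{ij}\bigr)/h$, applies Bayes' rule to the one-step conditioned transition probability, and uses the Markov property to replace $\mathbb{P}(X_\infty\in S^+\mid X(h)=j,\,X(0)=i)$ by $\mathbb{P}(X_\infty\in S^+\mid X(0)=j)$, so that the ratio of absorption probabilities factors out of the limit. You instead set up the full Doob $h$-transform: you prove harmonicity of the absorption probability, verify that $Q^+$ is a genuine rate matrix (nonnegative off-diagonal entries, rows summing to zero), compute the conditioned finite-dimensional distributions, telescope the ratio to identify the conditioned semigroup $P^+_{ij}(t)=P_{ij}(t)\,\mathbb{P}(X_\infty\in S^+\mid X(0)=j)/\mathbb{P}(X_\infty\in S^+\mid X(0)=i)$, and only then differentiate at $t=0$. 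The shared crux is identical in both arguments — Bayes plus the Markov property applied to the terminal absorption event — but your semigroup-level verification establishes something the paper's computation tacitly assumes, namely that the conditioned process is itself Markov (the paper only computes what its rates would be, granting Markovianity), and your row-sum check confirms $Q^+$ is a valid generator, which the paper never verifies. The cost is length; the paper's infinitesimal computation is shorter and suffices if one accepts that conditioning on the absorption event preserves the Markov property. Both treatments of the degenerate case $\mathbb{P}(X_\infty\in S^+\mid X(0)=i)=0$ and of the initial distribution (direct Bayes) coincide.
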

\begin{proof}
The initial distribution modification is a direct application of Bayes' Theorem. For a state i such that $\PP(X_\infty \in S^+|X(0) = i) = 0$, then $\PP(X(0) = i|X_\infty \in S^+) = 0$, and $\PP(X(t) = i|X_\infty \in S^+) = 0$ must also hold for all $t > 0$. This means that all $Q_{ki}$ for $k \neq i = 0$, and therefore state $i$ will not be reached at all in the new dynamics. We can therefore safely remove such states from the chain, or assign its transitions rate a value of 0 so that they act as absorbing non-achievable states. For a state $i$ such that $\PP(X_\infty \in S^+|X(0) = i) > 0$, remembering that the matrix $Q$ satisfies
\begin{equation*}
    Q_{ij} = \lim_{h \rightarrow 0} \dfrac{\PP(X(h) = j|X(0) = i) - \delta_{ij}}{h}
\end{equation*}
we can derive that
\begin{align*}
    Q^+_{ij} &= \lim_{h \rightarrow 0} \dfrac{\PP(X(h) = j|X(0) = i, X_\infty = S^+) - \delta_{ij}}{h} \\
    &= \lim_{h \rightarrow 0} \dfrac{\dfrac{\PP(X(h) = j|X(0) = i)\PP(X_\infty = S^+|X(h) = j, X(0) = i)}{\PP(X_\infty = S^+|X(0)=i)} - \delta_{ij}}{h} \\
    &= \lim_{h \rightarrow 0} \dfrac{\dfrac{\PP(X(h) = j|X(0) = i)\PP(X_\infty = S^+|X(0) = j)}{\PP(X_\infty = S^+|X(0)=i)} - \delta_{ij}}{h} \\ 
    &= \dfrac{\PP(X_\infty = S^+|X(0) = j)}{\PP(X_\infty = S^+|X(0)=i)} \lim_{h \rightarrow 0} \dfrac{\PP(X(h) = j|X(0) = i) - \delta_{ij}}{h} \\ 
    &= \dfrac{\PP(X_\infty = S^+|X(0) = j)}{\PP(X_\infty = S^+|X(0)=i)} Q_{ij}
\end{align*}
where we used $\delta_{ij}\frac{\PP(X_\infty = S^+|X(0) = i)}{\PP(X_\infty = S^+|X(0)=j)} = \delta_{ij}$ 
\end{proof}

In practice, the quantities $\{\mathbb{P}\left(X_\infty \in S^+|X_0 = i\right)\}_i$ can be found from the fundamental matrix of the jump matrix 

Once we have a Markov chain with the only set of absorbing states $S^+$ (and therefore $\PP(X_\infty \in S^+) = 1$) the distribution of stopping times (usually called hitting times in the context of Markov chains) to $S^+$ follows a phase-type distribution. If we collapse all the states of $S^+$ into one (by adding the probability rates that reach it), the hitting times are unchanged and the rate matrix of the Markov chain with $N_t$ transient states ($N_t$ depends on the choice of compartment model and $n$) takes the form $
Q' = \left(\begin{array}{cc}
     0 & 0  \\
     S^0 & S
\end{array}\right)$, where $S$ is a $N_t\times N_t$ matrix and $S_0$ is equal to $-S\Vec{1}$, where $\Vec{1}$ is the column vector of all ones. The time it takes to be absorbed in state 0 starting from a vector of initial probabilities $\Vec{\alpha}$ is distributed according to the distribution function $
F(t) = \PP(T_{S^+} \leq t) = 1- \Vec{\alpha}\exp(St)\vec{1}$, where exp($\cdot$) denotes the matrix exponential. The expected value is $-\Vec{\alpha}S^{-1}\Vec{1}$.

The full equation reads

\begin{align*}
    \PP(T_{D_W} \leq \tau) &= \PP(T_{D_W} < \infty)\PP(T_{D_W} \leq \tau | T_{D_W} < \infty) \\
    &= \PP(T_{D_W} < \infty)(1- \Vec{\alpha}\exp(St)\vec{1}) \numberthis~.
\end{align*}
where $S$ is obtained by the process of first conditioning and then collapsing the absorbing states and $\alpha$ also comes from conditioning $D$ and then collapsing the states in $D_W$.

However, this is infeasible in practice as $S$ scales roughly as $Q$, which is $s^n \times s^n$.

\subsection{Estimation of the state of the disease}
We now continue to solve analytically the rest of the tasks. \textbf{Q2A} and \textbf{Q2B} ask about patient zero posterior probabilities and outbreak time estimation. 

%\begin{figure}[H]
%    \centering
%    \includegraphics[width=0.8\columnwidth]{Sensors.png}
%    \caption{Example situation after sensor detection. Here green or red might mean a specific state or a group of states, depending on the power of the sensors}
%    \label{fig:sensors}
%\end{figure}

We define a state of the Markov chain to be \emph{compatible} with our observation if in that state the tested nodes are in a state which agrees with the tests. These include all possibilities for non-tested nodes but may include some variations in tested-nodes if the tests do not perfectly distinguish all states. We let $\mathcal{C} \subset D_W \subset \mathcal{S}$ be the set of compatible states, and $O$ denote our observation. 

For \textbf{Q2A} we can apply Bayes' Theorem
\begin{align}
    \mathbb{P}(X_0 = x|O) = \frac{\mathbb{P}(O|X_0=x)\mathbb{P}(X_0 = x)}{\mathbb{P}(O)} \propto \mathbb{P}(O|X_0=x)\mathbb{P}(X_0 = x)~,
\end{align}
as $\PP{(O)}$ is just a constant that ensures $\sum_{i\in\mathcal{S}}\mathbb{P}(X_0 = i|O) = 1$. We know $\mathbb{P}(X_0 = x)$, as the initial distribution $D$ is known. For $\mathbb{P}(O|X_0=x)$, we again consider all states of the detection set of the placed tests as absorbing, and we need to sum the probabilities of getting absorbed to exactly those states in the detection set which are compatible with our observation. Therefore
\begin{align}
\mathbb{P}(O|X_0=x) = \sum_{\alpha \in \mathcal{C}}\PP(X_\infty = \alpha|X_0=x)~.
\end{align}
The probability of ending in a specific absorbing state starting from a specific transient state can be found with the fundamental matrix of the jump Markov chain matrix. \textbf{Q2B} asks about the distribution of time since the epidemic began. To calculate the distribution function $\PP(t\leq k|O)$ conditioned on the absorption happening on a compatible state we can do exactly the same as we have done to calculate $\mathbb{P}\left(T_{D_W} \leq \tau \right | T_{D_W} < \infty)$ except for replacing $D_W$ for its subset $C$. \textbf{Q2C} asks about the probability distributions of the current state over the states in $D_W$. This is calculated using the same idea as \textbf{Q2A}. We denote $X$ the actual state

\begin{align}
\mathbb{P}(X=x|O) = \frac{\mathbb{P}(O|X=x)\mathbb{P}(X=x)}{\mathbb{P}(O)} \propto \mathbb{P}(O|X=x)\mathbb{P}(X=x) = \mathbb{I}(x\in\mathcal{C})\mathbb{P}(X=x)~.
\end{align}

$\mathbb{P}(X=x)$ is the probability of being absorbed at state $x$, which is known a priori with the fundamental matrix. Therefore, we see that the observation just restricts the probability distribution from $D_W$ to its subset $\mathcal{C}$. We can now solve for the probability of node $i$ being in state $s_l$ by summing over the posterior probabilities of all states in which $i$ is in $s_l$.

\subsection{Dynamic allocation for epidemic tracking}

%This comes from the fact that the states that are not part of $D_W$ are those in which the nodes in $W$ are in one of the $s-d$ non-detectable state, and we don't have any restriction over the rest. By direct counting, there are  $s^{n-|W|}(s-d)^{|W|}$ of such states, so subtracting it from the total gives us the desired count for $|D_W|$. 
%We are therefore sending subsets of $V$ to subsets of $\mathcal{S}$ in a way in which the cardinality of the output set only depends on the cardinality of the input set. Since we are only interested of the dynamics before the sensors trigger, we may consider all the states in the detection set as absorbing states in the Markov chain to calculate the corresponding hitting times. 

Note that if we perfectly know $P(X=x|O)$, then we are able to evaluate the expression for $H(X|O,T_{W'})$, since if $\mathcal{C}_i 
\subset \mathcal{C}$ are the compatible states with test output $t_i$, $\PP(T_{W'}=t_i) = \sum_{\alpha \in \mathcal{C}_i}\PP(\alpha|O)$
and similarly as before,
\begin{align}
\PP(X=x|O, T_{W'}=t_i) \propto \PP(T_{W'}=t_i|O, X=x)\PP(X=x|O) = \mathbb{I}(x\in \mathcal{C}_i)\PP(X=x|O)~.
\end{align}
We can now evaluate the mutual information for all tests to obtain the best one. Therefore, in this case we can perform the following iterative procedure:

\begin{itemize}
    \item At $t = t_i$, we test nodes according to the entropy criterion. After the tests we update the distributions conditioned on test results using Bayes' theorem. Let $D_i$ be the distribution over nodes conditioned on the test results
    \item Run the Markov chain analytically until $t = t_{i+1}-t_i$ with initial distribution $D_i$
    \item Calculate the state distributions at time $t_{i+1}$
    \item Decide which nodes to test at time $t_{i+1}$ according to the criteria in Q5, and calculate $D_{i+1}$ with the obtained results
\end{itemize}

\section{Proof of \cref{thm:submod} and \cref{thm:submodfhat}}
\label{Proofs}
\subsection{Proof of \cref{thm:submod}}
\begin{theorem}[Submodularity of $f_\tau$]
\label{thm:submodbi}
$f_\tau \colon W \mapsto \mathbb{P}\left(T_{D_W} \leq \tau\right)$ is a submodular function.
\end{theorem}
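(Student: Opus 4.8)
The plan is to recognize $f_\tau$ as a \emph{probabilistic coverage} function and then verify the diminishing-returns form of submodularity (Condition~1) directly. The central observation is that the stopping time $T_{D_W}$ decomposes over the individual nodes of $W$: for each node $j \in V$, let $A_j$ denote the event that node $j$ enters a detectable state at some time in $[0,\tau]$, that is,
\begin{equation*}
A_j = \{\exists\, t \in [0,\tau] : X_j(t) \in G_+\}.
\end{equation*}
Since the process reaches $D_W$ by time $\tau$ precisely when at least one node of $W$ has become detectable by time $\tau$, swapping the two existential quantifiers gives the identity $\{T_{D_W} \leq \tau\} = \bigcup_{j \in W} A_j$, and therefore
\begin{equation*}
f_\tau(W) = \mathbb{P}\Big(\bigcup_{j \in W} A_j\Big).
\end{equation*}

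With this reformulation in hand, I would verify Condition~1. Fix $X \subseteq Y \subseteq V$ and $x \in V \setminus Y$, and write $U_X = \bigcup_{j \in X} A_j$ and $U_Y = \bigcup_{j \in Y} A_j$. Using the elementary identity $\mathbb{P}(B \cup C) - \mathbb{P}(B) = \mathbb{P}(C \setminus B)$, the two marginal gains become
\begin{align*}
f_\tau(X \cup \{x\}) - f_\tau(X) &= \mathbb{P}(A_x \setminus U_X),\\
f_\tau(Y \cup \{x\}) - f_\tau(Y) &= \mathbb{P}(A_x \setminus U_Y).
\end{align*}
Because $X \subseteq Y$ implies $U_X \subseteq U_Y$, we obtain the event inclusion $A_x \setminus U_Y \subseteq A_x \setminus U_X$, and monotonicity of the probability measure yields $\mathbb{P}(A_x \setminus U_Y) \leq \mathbb{P}(A_x \setminus U_X)$. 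This is exactly the diminishing-returns inequality $f_\tau(X \cup \{x\}) - f_\tau(X) \geq f_\tau(Y \cup \{x\}) - f_\tau(Y)$, which establishes submodularity.

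The only real content is the first step: the decomposition $\{T_{D_W} \leq \tau\} = \bigcup_{j \in W} A_j$, which hinges on the fact that detection by the monitored set is the disjunction, over the individual monitored nodes, of each node's own first-detection event. Once this is in place, submodularity reduces to the elementary monotonicity of measure applied to nested difference events, with no appeal to the Markov property or to any particular compartmental structure. I would expect the main care to be taken in justifying the quantifier swap cleanly (and in noting measurability of the $A_j$, which is immediate since the finite-state jump process has piecewise-constant, right-continuous paths), rather than in any genuine optimization-theoretic difficulty.
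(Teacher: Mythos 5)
Your proof is correct, but it takes a genuinely different route from the paper. The paper proceeds in two steps: first it proves (Lemma~\ref{teo1}) that for \emph{any} target family the hitting-probability function $h \colon \mathcal{P}(\mathcal{S}) \to \mathbb{R}$, $h(Z) = \mathbb{P}(T_Z \leq \tau)$, defined on subsets of the \emph{state space} $\mathcal{S}$, is submodular; then it proves a general pullback lemma (Lemma~\ref{teo2}) showing that submodularity is preserved under composition with any monotone, union-preserving map $g \colon \mathcal{P}(V) \to \mathcal{P}(\mathcal{S})$, and finally writes $f_\tau = h \circ D$, where $D$ is the detection-set map. You instead collapse both steps into one observation: since $X(t) \in D_W$ iff some node of $W$ is detectable at time $t$, commuting the two existential quantifiers gives $\{T_{D_W} \leq \tau\} = \bigcup_{j \in W} A_j$, exhibiting $f_\tau$ as a probabilistic coverage function, after which diminishing returns follows from monotonicity of measure on nested difference events. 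The structural fact powering both arguments is the same one, namely $D_{A \cup B} = D_A \cup D_B$; you use it to pass to node-level events, while the paper keeps the hitting time at the state-space level and transfers submodularity through the composition. What each buys: the paper's route is more modular and reusable (Lemma~\ref{teo1} applies to arbitrary state subsets, not just detection sets, and Lemma~\ref{teo2} is a general-purpose tool for other observation structures), whereas yours is shorter, yields non-negativity and monotonicity for free from the union representation, and requires no auxiliary lemmas. It is also worth noting that your argument is essentially the population ($\mathbb{P}$-measure) version of the paper's own proof of Theorem~\ref{thm:submodfhat} for the empirical estimator $\hat f_\tau$, which counts runs $r$ with $\min_{k} L[r,k] \leq \tau$ via exactly this kind of per-node difference-set inclusion; so your approach has the added virtue of making the two proofs identical up to replacing the empirical measure by $\mathbb{P}$. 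One small point to keep clean: the identification $\{T_{D_W} \leq \tau\} = \{\exists\, t \in [0,\tau] : X(t) \in D_W\}$ uses that the first entrance time is attained, which holds here because the process has right-continuous, piecewise-constant paths on a finite state space --- the same implicit assumption the paper makes when defining $T_A$ as a minimum.
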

\begin{lemma}
Let $S$ be a finite set and consider $M$ a continuous time stochastic process over the states of $S$. Then, for $\tau \in \mathbb{R}_+$ the function $h(W) \colon W \mapsto \mathbb{P}\left(T_W \leq \tau\right)$, where $W\in\mathcal{P}(S)$ is submodular.
\label{teo1}
\end{lemma}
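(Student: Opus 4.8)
The plan is to recognize $h$ as a probabilistic coverage function. Writing $X(t)\in S$ for the state of the process $M$ at time $t$, recall that $T_W=\min\{t\geq 0 : X(t)\in W\}$, so the event $\{T_W\leq\tau\}$ is exactly the event that $M$ occupies some state of $W$ at some time in $[0,\tau]$. For each single state $s\in S$ I would introduce the event
\[
A_s = \{\,\exists\, t\in[0,\tau]\ :\ X(t)=s\,\},
\]
i.e. the process visits $s$ before time $\tau$, which is the same as $\{T_{\{s\}}\leq\tau\}$. Since reaching $W$ means reaching at least one of its states, $T_W=\min_{s\in W}T_{\{s\}}$, and hence
\[
\{T_W\leq\tau\}=\bigcup_{s\in W}A_s,\qquad h(W)=\mathbb{P}\Big(\bigcup_{s\in W}A_s\Big).
\]
This exhibits $h$ as the coverage function of the family $\{A_s\}_{s\in S}$, and from here submodularity is standard.

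To verify submodularity I would use the marginal-gain form (Condition 1). Writing $U_W=\bigcup_{s\in W}A_s$, for any $X\subseteq Y\subseteq S$ and any $x\in S\setminus Y$ the marginal gain is
\[
h(X\cup\{x\})-h(X)=\mathbb{P}(U_X\cup A_x)-\mathbb{P}(U_X)=\mathbb{P}(A_x\setminus U_X)=\mathbb{P}(A_x\cap U_X^c).
\]
Because $X\subseteq Y$ forces $U_X\subseteq U_Y$, hence $U_Y^c\subseteq U_X^c$, we obtain
\[
h(Y\cup\{x\})-h(Y)=\mathbb{P}(A_x\cap U_Y^c)\leq \mathbb{P}(A_x\cap U_X^c)=h(X\cup\{x\})-h(X),
\]
which is exactly Condition 1. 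Equivalently one could invoke Condition 2 via inclusion–exclusion: for any $A,B\subseteq S$ the identity $\mathbb{P}(U_A)+\mathbb{P}(U_B)=\mathbb{P}(U_A\cup U_B)+\mathbb{P}(U_A\cap U_B)$ together with $U_A\cup U_B=U_{A\cup B}$ and $U_{A\cap B}\subseteq U_A\cap U_B$ gives $h(A)+h(B)\geq h(A\cup B)+h(A\cap B)$.

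The only genuinely substantive step is the first reformulation: once $\{T_W\leq\tau\}$ is rewritten as the union of the per-state events $A_s$, the statement reduces to the classical fact that a probabilistic coverage function is submodular, and what remains is a one-line monotonicity argument on complements. I therefore expect no real obstacle; the only point requiring a word of care is measurability of the $A_s$, which holds since $S$ is finite and the paths of $M$ are regular enough (e.g. càdlàg) for $\{T_{\{s\}}\leq\tau\}$ to be an event. Finally, to deduce \cref{thm:submodbi} I would apply the lemma with $S=\mathcal{S}$ and use that the detection set decomposes as $D_W=\bigcup_{v\in W}D_{\{v\}}$, so that $f_\tau(W)=\mathbb{P}(T_{D_W}\leq\tau)$ is itself a coverage function over node subsets and inherits submodularity.
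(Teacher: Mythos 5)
Your proof is correct and is essentially the paper's own argument: the paper likewise computes the marginal gain as $\mathbb{P}\left(T_{\{x\}} \leq \tau \cap T_Z > \tau\right)$ (which is exactly your $\mathbb{P}(A_x\cap U_Z^c)$) and concludes by the containment of events induced by $X\subseteq Y$. Your ``coverage function'' phrasing is just a relabeling of that same decomposition, so there is no substantive difference.
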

\begin{proof}
We want to see that for $X\subset Y$ and $x \in S\setminus Y$, $
\mathbb{P}(T_{X\cup\{x\}} \leq \tau) - \\ \mathbb{P}(T_{X} \leq \tau) \geq \mathbb{P}(T_{Y\cup\{x\}} \leq \tau) - \mathbb{P}(T_{Y} \leq \tau)$. For $Z \subset S$, $\mathbb{P}(T_{Z\cup\{x\}} \leq \tau) - \mathbb{P}(T_{Z} \leq \tau) = \PP(T_{\{x\}} \leq \tau \cap T_Z > \tau)$. But since $X\subset Y$, $T_Y \geq \tau \implies T_X \geq \tau$, and so $\PP(T_{\{x\}} \leq \tau \cap T_X \geq \tau) \geq \PP(T_{\{x\}} \leq \tau \cap T_Y \geq \tau)$ and we are done.
\end{proof}

The second part of the submodularity proof for $f_\tau$ concerns being able to conserve the submodularity of $h$ under composition with functions of certain properties.
\begin{lemma} (Conservation of submodularity under pullback). Let $V, S$ be finite sets, and let $h \colon \mathcal{P}(S) \to \mathbb{R}$ be monotone submodular. Let $g \colon \mathcal{P}(V) \to \mathcal{P}(S)$ be a function satisfying $g(A\cup B) = g(A)\cup g(B)$ and $A \subset B \implies g(A) \subset g(B)$ for $A$, $B \subset V$. Then, $h \circ g \colon \mathcal{P}(V) \to \mathbb{R}$ is monotone submodular.

(Note that by $g(A)$ we do not mean $\{g(x)|x\in A\}$ but rather the image under $g$ of $A$ as an element $g(\{A\})$, but we omit the brackets. $A$ is a subset of $V$ but an element of $\mathcal{P(V)}$, and therefore $g$ sends it to a subset of $S$, element of $\mathcal{P}(S)$)
\label{teo2}
\end{lemma}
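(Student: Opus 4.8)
The plan is to verify the two claimed properties of $h \circ g$ separately, handling monotonicity first (since it is immediate and will be reused) and then submodularity, for which I would work with \textbf{Condition 2} of the submodularity definition given in the text, as it interacts most cleanly with the lattice operations that $g$ interacts with.

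For monotonicity, suppose $A \subset B \subseteq V$. By the second hypothesis on $g$ we get $g(A) \subset g(B)$, and since $h$ is monotone this yields $h(g(A)) \leq h(g(B))$, i.e.\ $(h\circ g)(A) \leq (h\circ g)(B)$. For submodularity, fix arbitrary $A, B \subseteq V$; the goal is to show
\begin{equation*}
h(g(A)) + h(g(B)) \geq h(g(A\cup B)) + h(g(A\cap B)).
\end{equation*}
The first step is to apply Condition 2 for the submodular function $h$ directly to the two sets $g(A), g(B) \subseteq S$, which gives
\begin{equation*}
h(g(A)) + h(g(B)) \geq h\bigl(g(A)\cup g(B)\bigr) + h\bigl(g(A)\cap g(B)\bigr).
\end{equation*}
Next I would rewrite the union term using the first hypothesis on $g$, namely $g(A)\cup g(B) = g(A\cup B)$, so that term already matches the desired right-hand side exactly.

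The only remaining gap is the intersection term, and this is where I expect the single real subtlety to lie. The hypotheses on $g$ guarantee that $g$ preserves \emph{unions} exactly, but for intersections we only have one inclusion: since $A\cap B \subset A$ and $A\cap B \subset B$, monotonicity of $g$ gives $g(A\cap B) \subseteq g(A)$ and $g(A\cap B) \subseteq g(B)$, hence $g(A\cap B) \subseteq g(A)\cap g(B)$, but in general not equality. The key observation is that this inclusion is in the favorable direction: because $h$ is monotone, $h(g(A\cap B)) \leq h\bigl(g(A)\cap g(B)\bigr)$, so replacing the intersection term $h(g(A)\cap g(B))$ by the smaller quantity $h(g(A\cap B))$ only weakens the right-hand side. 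Chaining the two displayed inequalities with this monotonicity bound produces precisely the claimed submodularity inequality, completing the proof.

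I would emphasize in the write-up that monotonicity of $h$ is genuinely needed here (not merely submodularity): it is exactly what converts the one-sided inclusion $g(A\cap B)\subseteq g(A)\cap g(B)$ into a usable inequality, and it explains why the lemma hypothesizes $h$ to be \emph{monotone} submodular rather than merely submodular. With this lemma in hand, Theorem~\ref{thm:submodbi} follows by taking $g$ to be the map $W \mapsto D_W$ (which satisfies $D_{A\cup B} = D_A \cup D_B$ and is monotone in $W$) and $h$ to be the stopping-time function from Lemma~\ref{teo1}.
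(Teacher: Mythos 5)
Your proof is correct and follows essentially the same route as the paper's: apply Condition 2 of submodularity to $g(A)$ and $g(B)$, rewrite the union term via $g(A\cup B)=g(A)\cup g(B)$, and handle the intersection term through the one-sided inclusion $g(A\cap B)\subseteq g(A)\cap g(B)$ combined with monotonicity of $h$. Your closing remark about why monotonicity of $h$ is genuinely needed is a nice addition, but the underlying argument is identical to the paper's.
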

\cref{fig:diag} provides a scheme of the situation, in which we have used the explicit notation.
\begin{figure}[H]
    \centering
    \includegraphics[width=0.8\columnwidth]{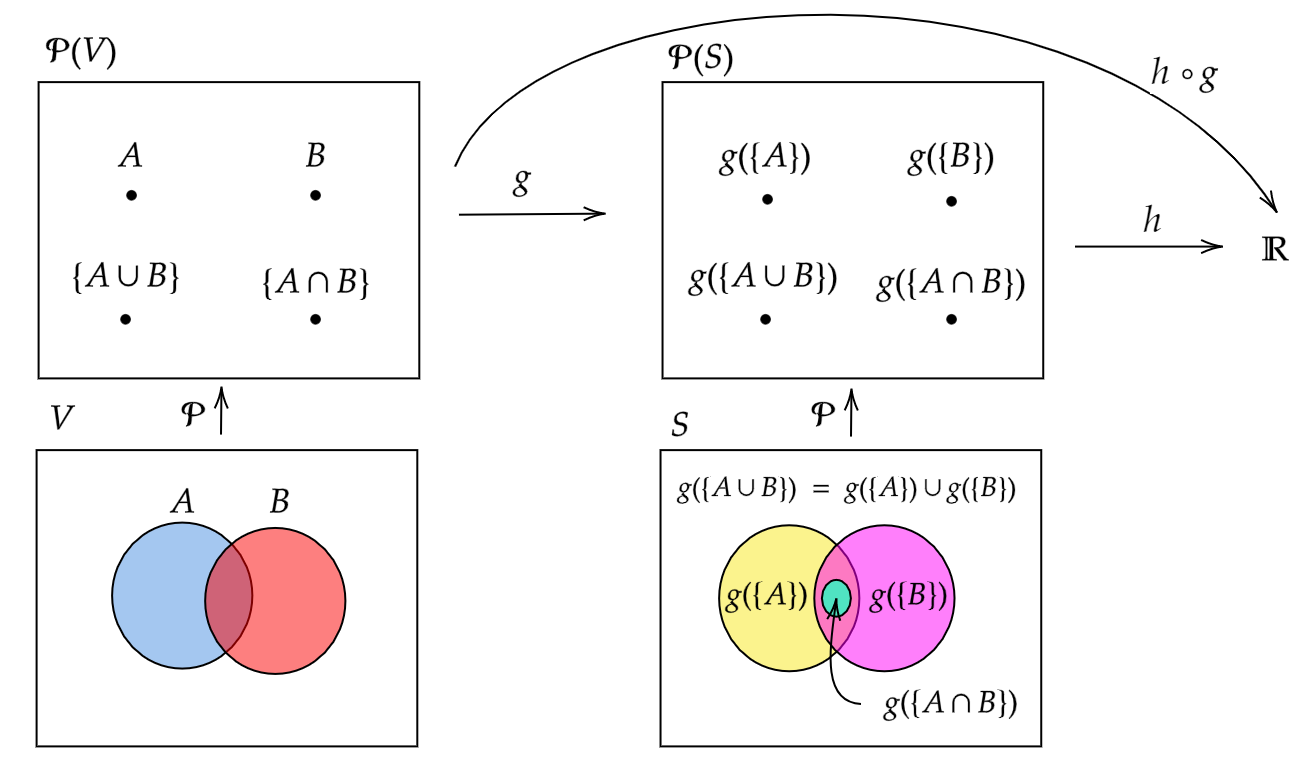}
    \caption{Scheme of \cref{teo2}. We want to see that given that $h$ is monotone submodular and $g$ satisfies certain conditions, the submodularity is conserved under the composition $h \circ g$}
    \label{fig:diag}
\end{figure}

\begin{proof}
The monotonicity of $h \circ g$ comes from the monotonicity of $h$ and $g$. For the submodularity, we want to see that for all $S, T \subset V$, 
\begin{align*}
(h \circ g) (S) + (h \circ g) (T) &\geq (h \circ g) (S\cup T) + (h \circ g) (S\cap T) \\
h(g(S)) + h(g(T)) &\geq h(g(S\cup T)) + h(g(S\cap T)) \\
h(g(S)) + h(g(T)) &\geq h(g(S) \cup g(T))) + h(g(S\cap T))
\end{align*}
Since $h$ is submodular and $g(S)$ and $g(T)$ are subsets of $S$, we know that 
\begin{align*}
h(g(S)) + h(g(T)) \geq h(g(S) \cup g(T))) + h(g(S)\cap g(T))
\end{align*}
But since $g$ is monotonous we have $g(S\cap T) \subset (g(S) \cap g(T))$ and since $h$ is monotonous $h(g(S\cap T)) \leq h(g(S) \cap g(T))$ and so we have $h(g(S)) + h(g(T)) \geq h(g(S) \cup g(T))) + h(g(S\cap T))$
\end{proof}

\paragraph{Proof of \cref{thm:submod}:}
We apply \cref{teo2} to the composition $h\circ D$, where $D$ is the detection set function $D \colon \mathcal{P(V)} \to \mathcal{P(S)}$ mapping $W$ to $D_W$, where $V$ is the set of vertices in the graph and $S$ the set of states of the stochastic process. By \cref{teo1} $h$ is submodular and it is also clearly monotonous by the same argument that we have shown that $f_\tau$ is monotonous. By definition of the detection set function $D$, $D_A \subset D_B$ if $A \subset B$ and $D_{A\cup B} = D_A \cup D_B$. Therefore, $f = h \circ D$ is submodular. 
$\square$.

\subsection{Proof of \cref{thm:submodfhat}}
\begin{theorem}
The sample approximation of $f_\tau$, $\hat{f}_\tau$ defined in \cref{f_hat_def}, is non-negative, monotone and submodular for all $N_R \in \mathbb{N}$
\end{theorem}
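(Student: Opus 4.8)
The plan is to exhibit $\hat f_\tau$ as a non-negative average of elementary ``coverage'' set-functions, one per Monte Carlo run, and then invoke the fact that each of the three desired properties is preserved under non-negative linear combinations. Concretely, for each run $r \in [N_R]$ I would define the set of nodes that become detectable in time as
\begin{equation*}
A_r = \{\, j \in [n] \colon L[r,j] \leq \tau \,\},
\end{equation*}
so that, directly from the definition in \cref{f_hat_def},
\begin{equation*}
\hat f_\tau(W) = \frac{1}{N_R}\sum_{r=1}^{N_R} g_r(W), \qquad g_r(W) := \mathbb{I}\big(W \cap A_r \neq \emptyset\big),
\end{equation*}
because $\min_{i\in W} L[r,i] \leq \tau$ holds exactly when $W$ contains at least one node of $A_r$.

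Non-negativity and monotonicity then follow immediately: each $g_r$ takes values in $\{0,1\}$, and if $X \subseteq Y$ then $X \cap A_r \neq \emptyset$ forces $Y \cap A_r \neq \emptyset$, so $g_r(X)\leq g_r(Y)$; summing and rescaling by $1/N_R$ preserves both properties. The substantive step is submodularity of each $g_r$, which I would verify through Condition 1 (diminishing returns). For $X \subseteq Y \subseteq V$ and $x \in V \setminus Y$, the marginal gain $g_r(X\cup\{x\}) - g_r(X)$ equals $1$ precisely when $x \in A_r$ and $X \cap A_r = \emptyset$, and is $0$ otherwise. Since $X \subseteq Y$ gives $X \cap A_r \subseteq Y \cap A_r$, whenever the gain on the larger set $Y$ equals $1$ (so $Y \cap A_r = \emptyset$) the gain on $X$ is also $1$; hence the marginal gain is non-increasing in the base set, which is exactly submodularity of $g_r$. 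Because submodularity, monotonicity and non-negativity are each closed under non-negative linear combinations, the average $\hat f_\tau = \tfrac{1}{N_R}\sum_r g_r$ inherits all three, for every $N_R \in \mathbb{N}$.

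An alternative route, which I would mention since it reuses the work already done, is to observe that $\hat f_\tau$ is nothing but $f_\tau$ computed for the \emph{empirical} process that selects one of the $N_R$ recorded sample paths uniformly at random. This is a legitimate continuous-time process on the finite state space, and the pathwise implication $T_Y \geq \tau \Rightarrow T_X \geq \tau$ for $X \subseteq Y$ exploited in \cref{teo1} holds deterministically on each stored path; hence \cref{teo1} and the pullback \cref{teo2} apply verbatim, giving submodularity of $\hat f_\tau = h_{\mathrm{emp}} \circ D$ at once. I do not anticipate a genuine obstacle here: the only point requiring care is the decomposition into per-run coverage indicators and the verification that each $g_r$ satisfies diminishing returns; once that is in place, closure of the three properties under averaging finishes the argument. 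It is worth emphasizing that the conclusion holds for every finite $N_R$, and not merely in the $N_R \to \infty$ limit in which $\hat f_\tau \to f_\tau$.
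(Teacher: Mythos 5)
Your proof is correct, and it is worth separating your two routes. The primary route --- writing $\hat f_\tau(W) = \frac{1}{N_R}\sum_{r} g_r(W)$ with $g_r(W) = \mathbb{I}(W \cap A_r \neq \emptyset)$, checking diminishing returns for each coverage indicator, and invoking closure under non-negative combinations --- is in substance the paper's own argument in modular form: the paper verifies Condition 1 for $\hat f_\tau$ directly by counting runs, and its two counts, $|\{r : \min_{k\in X}L[r,k] > \tau \text{ and } L[r,x]\leq\tau\}|$ versus the same with $Y$ in place of $X$, are exactly the sums over $r$ of your per-run marginal gains; your key implication $Y\cap A_r = \emptyset \Rightarrow X \cap A_r = \emptyset$ is the paper's $\min_{k\in Y}L[r,k] > \tau \Rightarrow \min_{k\in X}L[r,k]>\tau$. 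What your packaging buys is the recognition of $\hat f_\tau$ as a normalized coverage function --- the canonical monotone submodular example --- which makes the three properties immediate once the decomposition is written down, but the combinatorial core is identical. Your alternative route, by contrast, is genuinely different from the paper's proof of \cref{thm:submodfhat}: you view $\hat f_\tau$ as $f_\tau$ computed under the empirical measure that selects one of the $N_R$ recorded sample paths uniformly at random, so that \cref{teo1} and \cref{teo2} apply verbatim. This is valid, since both lemmas hold for an arbitrary continuous-time process on a finite state space and the empirical process is one, and it is pleasingly dual to the paper's closing remark: the paper observes that taking $N_R \to \infty$ limits in the submodularity inequality for $\hat f_\tau$ reproves submodularity of $f_\tau$, whereas you specialize the exact-$f_\tau$ machinery to the empirical measure to obtain the $\hat f_\tau$ statement for every finite $N_R$, unifying the two theorems under a single lemma.
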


\begin{proof}
Non-negativity is true by definition, and monotonicity comes from the fact that if $A \subset B$, $min_{k\in A} L[i,k] \geq min_{k\in B} L[i,k]$ and so for a lesser or equal number of runs the minimum over $A$ will be less or equal than the minimum over $B$, so $f(A) \leq f(B)$. For submodularity, we want to prove that for $X\subset Y$ and $x \in V\setminus Y$
\begin{align}
\hat f (X\cup \{x\}) - \hat f (X) \geq \hat f (Y) - \hat f (Y\cup \{x\}) ~.
\end{align}
The left hand side is $$|i \in [N_R] \text{ s.t } \min_{k\in X\cup\{x\}}{L[i,k]} <= \tau| - |i \in [N_R] \text{ s.t } \min_{k\in X}{L[i,k]} <= \tau|$$ which equals $|i \in [N_R] \text{ s.t } (\min_{k\in X}{L[i,k]} > \tau) \cap (L[i,x] \leq \tau)|$. Similarly, the right hand side is $|i \in [N_R] \text{ s.t } (\min_{k\in Y}{L[i,k]} > \tau) \cap (L[i,x] \leq \tau)|$. As $\min_{k\in Y}{L[i,k]} > \tau \implies \min_{k\in X}{L[i,k]} > \tau$, there are at least as many elements in the set of the left-hand side than in the set of the right-hand side, proving the inequality.
\end{proof}
Taking limits in the submodularity inequality for $\hat f_\tau$ provides an alternative proof that $f_\tau$ is submodular.

%\section*{Acknowledgments}
%We would like to acknowledge the assistance of volunteers in putting
%together this example manuscript and supplement.

\bibliographystyle{siamplain}
\bibliography{references}
\end{document}